%% file: npi_arxiv.tex
\providecommand{\blind}{0}
\providecommand{\draftnotes}{0}

\documentclass[11pt,letterpaper]{article}
\usepackage[margin=1in]{geometry}
\usepackage{iftex}
\usepackage[T1]{fontenc}
\ifPDFTeX
  \usepackage[utf8]{inputenc}
\fi
\usepackage{lmodern,microtype}
\usepackage{amsmath,amssymb,amsthm}
\usepackage{mathrsfs}
\usepackage{needspace}
\usepackage{graphicx,helvet,xcolor}
\usepackage{algorithm}
\usepackage{algpseudocode}
\usepackage[authoryear,round]{natbib}
\usepackage{tikz}
\usetikzlibrary{arrows.meta,decorations.pathreplacing,calc,positioning,patterns,plotmarks}
\usepackage{pgfplots}
\pgfplotsset{compat=1.18}
\usepackage{caption}
\usepackage[colorlinks=true,linkcolor=blue,citecolor=blue,urlcolor=blue]{hyperref}
\IfFileExists{xurl.sty}{\usepackage{xurl}}{}
\providecommand{\harvardurl}[1]{\url{#1}}
\renewcommand{\harvardurl}[1]{\url{#1}}

\newcommand{\spacingset}[1]{\renewcommand{\baselinestretch}{#1}\small\normalsize}
\newcommand{\Prob}{\mathbb{P}}
\newcommand{\ind}{\mathbf{1}}
\newcommand{\E}{\mathbb{E}}

\numberwithin{equation}{section}
\newtheorem{proposition}{Proposition}[section]
\newtheorem{theorem}[proposition]{Theorem}
\newtheorem{corollary}[proposition]{Corollary}
\newtheorem{lemma}[proposition]{Lemma}
\theoremstyle{remark}
\newtheorem{remark}[proposition]{Remark}
\theoremstyle{plain}
\algrenewcommand{\algorithmicrequire}{\textbf{Input:}}
\algrenewcommand{\algorithmicensure}{\textbf{Output:}}
\newif\ifblinded
\ifnum\blind=1\relax\blindedtrue\else\blindedfalse\fi
\newcommand{\draftnote}[1]{%
  \ifblinded\else\ifnum\draftnotes=1\relax{\color{black}#1}\fi\fi}

\newcommand{\papertitle}{Neyman--Pearson Classification with Indecisions:
Simultaneous Control of Selective Type I and Type II Errors}
\title{\papertitle}
\date{}
\ifblinded
  \author{}
  \hypersetup{pdftitle={\papertitle},pdfauthor={},pdfsubject={},pdfkeywords={}}
\else
  \author{Bradley Rava\thanks{University of Sydney Business School.}
    \and Bowen Gang\thanks{University of Alberta, Department of Mathematical and Statistical Sciences.}}
  \hypersetup{pdftitle={\papertitle},pdfauthor={Bradley Rava and Bowen Gang}}
\fi

\begin{document}
\pagenumbering{arabic}
\spacingset{1}
\maketitle

\begin{abstract}
Binary classification often requires controlling both type I and type II error rates at prespecified levels. When the class distributions overlap substantially, the two error limits may be incompatible if every observation must be classified. We study Neyman--Pearson classification with indecisions (NPI), which permits uncertain observations to remain unclassified and assesses selective type I and type II errors among definitive classifications within each class. For an independently trained score, we develop a model-free calibration procedure that combines all candidate threshold pairs passing calibration into a rule whose decision region contains that of every passing candidate. We couple the entire threshold search to a common martingale within each class and compute its finite-horizon crossing probability exactly, avoiding a multiplicity adjustment for the number of candidate pairs. The resulting classifier satisfies both population selective error limits simultaneously with user-specified high probability in finite samples, without assumptions on the accuracy of the fitted score. We bound the additional indecision due to calibration and evaluate the procedure in simulations and applications to recidivism and credit-default prediction.

\end{abstract}

\noindent\textit{Keywords:} Abstention; Distribution-free inference; Finite-sample validity; Label shift; Test martingale.
\medskip

\section{Introduction}
\label{sec:introduction}
Artificial intelligence systems are used to support decisions in medicine, finance, and criminal justice, where incorrect classifications can have serious consequences for the individuals involved. Examples include identifying patients with a disease, detecting fraudulent transactions, and assessing recidivism risk. Although advances in machine learning have improved predictive performance, high accuracy alone does not ensure that a classifier's errors remain below levels acceptable to the user. Consequential decisions therefore call for explicit error guarantees at levels chosen by the user.

A binary classifier can make an error by assigning an observation from class 1 to class 2, or an observation from class 2 to class 1. In Neyman--Pearson (NP) classification, the corresponding class-conditional misclassification probabilities are termed the type I and type II error rates, respectively \citep{tong2018neyman}. These errors can have different consequences. In disease screening, classifying a healthy individual as diseased may lead to unnecessary follow-up, whereas classifying a diseased individual as healthy may delay diagnosis. Both class-conditional error rates may therefore need to be controlled, potentially at different levels. The NP paradigm addresses this asymmetry by minimizing type II error subject to a prescribed limit on type I error \citep{scott2005neyman}. However, the minimum achievable type II error under this constraint may still be unacceptably large. When the class distributions overlap substantially, even an oracle knowing both distributions may be unable to meet the desired limits on both errors if every observation must be classified.

One way to address this difficulty is to allow an indecision for observations that cannot be classified with sufficient confidence \citep{herbei2006classification, SunWei11, rava2026burden}. These observations can be referred for further assessment, while definitive decisions are made on the remaining observations. This selective approach shifts attention to the reliability of the decisions actually made. A common criterion in this framework is the false selection rate (FSR), defined as the expected fraction of incorrect classifications among observations receiving a definitive decision. Class-specific FSR criteria assess this fraction separately among observations assigned to each class \citep{gang2024lass,sun2025unified}.

However, even class-specific FSR control does not generally provide class-conditional error control. Class-specific FSR criteria assess errors among observations \emph{assigned to} a class, rather than among classified observations \emph{belonging to} that class. In disease screening, FSR control for positive predictions limits the expected fraction of healthy individuals among those labeled positive. This fraction can be small even when a large proportion of the healthy individuals receiving a definitive decision are labeled positive. A small FSR among negative predictions does not rule out this possibility. Hence, reliable predictions for both labels do not necessarily imply low misclassification probabilities within both classes.

This article studies Neyman--Pearson classification with indecisions (NPI) to address this issue. We consider the \emph{selective type I and type II error rates}, which measure the probabilities of misclassification among observations from class 1 and class 2, respectively, conditional on receiving a definitive decision. The two error limits can be chosen separately to reflect the different consequences of misclassification for the two classes. Our goal is to minimize the proportion of indecisions subject to both error constraints.

\begin{figure}[t!]
    \centering
    \input{arxiv_figures/compas_np_25.tex}
    \caption{\label{compas_intro.plot}\small Comparison of NP and NPI classification on COMPAS. Left: NP umbrella, which classifies every observation. Right: NPI, which permits indecisions. Violins show held-out errors across all 1,000 partitions from Section~\ref{subsec:compas-application}. Symbols mark medians, vertical intervals span the 5th--95th percentiles, and green ticks at zero are labeled with the percentage of partitions with zero error. Dashed lines mark the $25\%$ target, imposed on type I error for NP and jointly on both selective errors for NPI.
    }
\end{figure}

Figure~\ref{compas_intro.plot} contrasts conventional NP classification with our NPI procedure using the COMPAS recidivism data \citep{Angetal16-COMPAS1,Dieetal16-COMPAS2}. The left panel shows the NP umbrella classifier of \citet{tong2018neyman}, which assigns a class label to every individual. It controls the population type I error at the $25\%$ target with high probability, but places no corresponding bound on type II error. In the displayed results, the held-out type I errors are centered below the target, whereas the type II errors lie well above it. The right panel shows our NPI procedure, which permits indecisions. The procedure provides a joint high-probability guarantee that both population selective errors are at most $25\%$. Both plotted selective error distributions are centered below this common target.

Obtaining this joint population guarantee from finite data requires valid calibration of the classification thresholds. In practice, a fitted classifier supplies a score, but the population selective errors associated with candidate thresholds are unknown. Choosing thresholds that satisfy both population constraints therefore requires more than checking whether their empirical errors are below the desired levels. We develop a data-driven NPI algorithm that tests candidate threshold pairs using a labeled calibration sample independent of score training. The decision regions of all pairs passing both calibration tests can be combined into one rule that also passes and classifies at least as many observations as any of these candidates.

This paper makes several contributions. First, we develop a model-free calibration algorithm that accommodates scores from any independently trained classifier, including complex black-box models. Second, we establish a joint finite-sample guarantee: both population selective error limits hold simultaneously with probability at least $1-\delta$, for a user-specified $\delta$. A common martingale construction bounds the probability that any threshold pair violating a class's error limit passes its check. The martingale's exact crossing probability determines the critical values. Third, the selected rule maximizes the decision probability within each class among candidates passing calibration, without estimating class proportions. The error guarantee remains valid under label shift, where class proportions change while the feature distributions within classes remain fixed. Finally, simulations and real-data analyses illustrate the gains in coverage, including substantial improvements over both Learn then Test implementations \citep{angelopoulos2025learn} in the nonlinear setting and the applications.

\subsection{Related work}
\label{subsec:related-work}

NP classification develops decision rules subject to explicit class-conditional error constraints \citep{scott2005neyman,rigollet2011neyman}. The NP umbrella algorithm provides high-probability type I error control for a broad range of scoring classifiers \citep{tong2018neyman}, and \citet{tian2025neyman} extend the NP formulation to multiple error constraints in multiclass problems. Most closely related, \citet{ndaoud2026ask} study the same selective type I and type II error criteria and the objective of minimizing indecisions under both error constraints. They characterize oracle rules and propose empirical threshold calibration. Our work develops a calibration procedure with joint finite-sample guarantees for both population errors.

Allowing indecisions also connects our work to classification with a reject option \citep{chow1970optimum,herbei2006classification}. Selective classification studies the trade-off between the fraction of predictions retained (coverage) and the overall error among them \citep{elyaniv2010foundations,geifman2017selective}. FSR-based methods control the expected proportion of errors among selected predictions, with guarantees for individual predicted classes and, in some settings, protected groups \citep{gang2024lass,sun2025unified,rava2026burden}. Related work uses conformal $p$-values for predictive selection with false discovery rate control \citep{jin2023selection, marandon2024adaptive, gazin2025selecting}. Class-specific FSR criteria group observations by assigned labels; our selective errors condition on class membership among definitive decisions.

Set-valued classification offers another way to represent uncertainty by returning a set of plausible labels \citep{lei2014classification,sadinle2019least}. These methods seek informative sets subject to prediction-set coverage requirements, which may be imposed separately within each class. Conformal prediction also provides prediction-set coverage guarantees, including class-conditional coverage \citep{shafer2008tutorial,vovk2013conditional,romano2020classification}. Such guarantees concern whether the prediction set contains the correct label. They do not by themselves control the selective errors obtained by retaining only singleton predictions, because the latter additionally condition on receiving a definitive decision.

Finite-sample calibration translates population error requirements into data-driven decision rules. \citet{bates2021riskcontrolling} construct risk-controlling prediction sets for general loss functions. The Learn then Test (LTT) framework of \citet{angelopoulos2025learn} allows simultaneous control of multiple risks and data-driven selection among candidates passing its tests, including optimization of an empirical performance criterion using the calibration data. Its calibration step uses a multiple-testing procedure to account for selection among candidate rules. Single-start fixed-sequence testing avoids a candidate-count adjustment, but requires an ordering fixed independently of calibration and stops at the first nonrejection. Recent work also provides finite-grid guarantees for multiple selective error metrics \citep{jemelen2026beyond} and for selective risk, acceptance probability, and utility \citep{yu2026jointcertificate}, with adjustments for the grid size. Uniform confidence bands for the class-conditional score distributions, such as those supplied by the Dvoretzky--Kiefer--Wolfowitz inequality \citep{massart1990tight}, offer another route to simultaneous error bounds. A key innovation of our approach is a common martingale construction that directly targets the selective error constraints within each class and provides simultaneous validity for all passing threshold pairs, without a prescribed testing order or a multiplicity adjustment for the number of pairs searched. We calibrate the critical values using the exact finite-horizon crossing probability of this common process. We then combine the decision regions of passing pairs to maximize coverage while preserving joint finite-sample error control.

\subsection{Paper organization}
\label{subsec:paper-organization}

Section~\ref{sec:problem} formulates the problem and derives an oracle rule. Section~\ref{sec:methodology} presents NPI, its error guarantee, and the cost of calibration. Sections~\ref{sec:simulations} and~\ref{sec:real_data} report simulations and the COMPAS application, followed by a discussion in Section~\ref{sec:discussion}. The appendix contains proofs, additional theory, implementation details, and the credit-card analysis.

\section{Problem formulation}
\label{sec:problem}

Suppose we observe a labeled data set $\mathcal D^{\mathrm{obs}}=\{(X_j,Y_j):j=1,\ldots,n\}$, where $X_j\in\mathcal X$ is a feature vector and $Y_j\in\mathcal C=\{1,2\}$ is the corresponding class label. Our goal is to use these data to classify new observations with explicit control of both type I and type II errors. For a generic observation $(X,Y)$ from the target population, let $P_i$ denote the conditional distribution of $X$ given $Y=i$, and let $\pi_i=\Prob(Y=i)>0$, $i=1,2$. We assume that the observed and future feature vectors are independent conditional on their class labels, with common distribution $P_i$ within class $i$. Class proportions may differ between the observed data and the target population; $\pi_i$ denotes the target proportion.

\subsection{Background: type I and type II errors}
\label{subsec:ordinary-errors}

A standard classification rule $g:\mathcal X\to\mathcal C$ assigns every observation to one of the two classes. Taking class 1 as the null state, the type I and type II error rates are, respectively,
\begin{equation}
 R_i^{\mathrm{full}}(g)
=\Prob\{g(X)\ne Y\mid Y=i\},
\qquad i=1,2.
 \label{eq:ordinary-errors}
\end{equation}
These rates measure the probability of misclassification conditional on the actual class.

In applications where the two types of error have different consequences, it is often desirable to give one error priority. The NP paradigm minimizes type II error subject to a prespecified type I error constraint \citep{cannon2002learning,scott2005neyman,tong2013plugin}. For a user-specified level $\alpha_1\in(0,1)$, the objective is to
\begin{equation}
 \underset{g}{\text{minimize}}\; R_2^{\mathrm{full}}(g)
 \quad\text{subject to}\quad R_1^{\mathrm{full}}(g)\le\alpha_1.
 \label{eq:ordinary-np}
\end{equation}
The resulting type II error can be large, as Figure~\ref{compas_intro.plot} illustrates for COMPAS.

\subsection{A selective notion of classification error}
\label{subsec:errors}

To address this difficulty, we allow the rule to return an indecision. Consider a rule $h:\mathcal X\to\Lambda$ with action space $\Lambda=\{0,1,2\}$. The action $h(X)=i$ assigns an observation to class $i$, whereas $h(X)=0$ returns an \emph{indecision}, corresponding to the reject option in classification \citep{herbei2006classification,SunWei11}. We define the \emph{selective type I and type II error rates} by
\begin{equation}
	R_i(h)=\Prob\{h(X)\ne Y\mid Y=i,\ h(X)\ne0\},
	\qquad i=1,2.
	\label{eq:risks}
\end{equation}
Each rate measures the probability of an incorrect classification among observations that actually belong to class $i$ and receive a definitive decision. Equivalently,
\[
R_1(h)=\frac{P_1\{h(X)=2\}}{P_1\{h(X)\ne0\}},
\qquad
R_2(h)=\frac{P_2\{h(X)=1\}}{P_2\{h(X)\ne0\}},
\]
with the convention $0/0=0$. These criteria combine the class-conditional perspective of the NP paradigm with conditioning on selection. Controlling only the probabilities in the numerators above can be misleading: a rule may make few errors simply because it rarely makes a decision. The denominators restrict the assessment to the observations classified within each actual class. When there are no indecisions, $R_i(h)=R_i^{\mathrm{full}}(h)$, so the selective error rates reduce to the usual type I and type II error rates.

Conditioning on the actual class distinguishes the selective type I and type II error rates from the pooled selective risk, $\Prob\{h(X)\ne Y\mid h(X)\ne0\}$, which pools all definitive classifications \citep{elyaniv2010foundations,geifman2017selective}. Such a pooled criterion can be small even when the selective error $R_i(h)$ is large for a class that contributes few decisions. The FSR evaluates the expected fraction of errors in a finite set of selected observations, with the false discovery rate of \citet{benjamini1995controlling} as its multiple-testing special case. Class-specific FSR criteria further distinguish errors among observations \emph{assigned to} each class \citep{gang2024lass,rava2026burden}. For example, when $\Prob\{h(X)=2\}>0$, the population counterpart of the FSR for selecting class 2 is $\Prob\{Y=1\mid h(X)=2\}$. In contrast, $R_1(h)$ assesses errors among observations \emph{from class 1} that receive a definitive decision. Consequently, pooled or predicted-class error control does not generally ensure that both selective type I and type II errors are small.

Allowing indecisions can reduce classification errors, but leaves some observations unclassified. To quantify this cost, define the population indecision rate by
\begin{equation}
 \Gamma(h)=\Prob\{h(X)=0\}
 =\sum_{i=1}^2\pi_iP_i\{h(X)=0\}.
 \label{eq:indecision-rate}
\end{equation}
The coverage is $C(h)=1-\Gamma(h)=\sum_{i=1}^2\pi_iq_i(h)$, where $q_i(h)=P_i\{h(X)\ne0\}$ is the decision probability within class $i$. For prespecified levels $\alpha_1,\alpha_2\in(0,1)$, our goal is to
\begin{equation}
 \underset{h}{\text{minimize}}\;\Gamma(h)
 \quad\text{subject to}\quad
 R_1(h)\le\alpha_1,\qquad R_2(h)\le\alpha_2.
 \label{eq:population}
\end{equation}
The optimization is over all measurable classification rules, with randomization allowed. For a randomized rule, the probabilities in \eqref{eq:risks} and \eqref{eq:indecision-rate} also average over the decision randomization.

\subsection{The oracle rule}
\label{subsec:oracle}

\begingroup\color{black}
With the population distributions known, an oracle rule $h^{\mathrm{OR}}$ solves \eqref{eq:population}. We show that such a rule can be obtained by thresholding the likelihood ratio. Let $f_i$ denote the density of $X$ given $Y=i$, $i=1,2$, and define
\[
 L(x)=\frac{f_1(x)}{f_2(x)}.
\]
Large values of $L(x)$ favor class 1, whereas small values favor class 2. Consider rules of the form
\begin{equation}
 h^L_{\ell,u}(x)
 =2\cdot\ind\{L(x)<\ell\}+\ind\{L(x)>u\},
 \label{eq:oracle}
\end{equation}
where $\ind\{\cdot\}$ is the indicator function and $0\le\ell\le u\le+\infty$. Equation \eqref{eq:oracle} specifies the decision when $L(x)\notin\{\ell,u\}$. At $L=\ell<u$, randomization between class 2 and indecision is allowed; at $L=u>\ell$, randomization between class 1 and indecision is allowed. When $\ell=u$, all three actions are allowed at the common threshold.

We choose the thresholds to minimize indecision while satisfying both error constraints:
\begin{equation}
 (\ell^{\mathrm{OR}},u^{\mathrm{OR}})
 \in\underset{\substack{0\le\ell\le u\le+\infty\\
 R_i(h^L_{\ell,u})\le\alpha_i,\ i=1,2}}
 {\operatorname{arg\,min}}\;\Gamma(h^L_{\ell,u}).
 \label{eq:oracle-cutoffs}
\end{equation}
The minimization is also over the randomization probabilities at $L=\ell$ and $L=u$. The following theorem shows that this threshold search attains the optimum over all classification rules in \eqref{eq:population}.

\Needspace{9\baselineskip}
\begin{theorem}
\label{thm:oracle}
Let $\mathcal H_{\boldsymbol\alpha}$ be the collection of all measurable, possibly randomized classifiers satisfying $R_i(h)\le\alpha_i$, $i=1,2$. For $\pi_i>0$ and $\alpha_i\in(0,1)$, $i=1,2$, problem \eqref{eq:population} admits an optimal rule of the form \eqref{eq:oracle}, with boundary randomization if needed. Consequently, \eqref{eq:oracle-cutoffs} has a solution, and any resulting rule $h^{\mathrm{OR}}$ satisfies
\[
 \Gamma(h^{\mathrm{OR}})
 =\min_{h\in\mathcal H_{\boldsymbol\alpha}}\Gamma(h).
\]
If some rule without indecisions satisfies both constraints, the oracle thresholds can be chosen to coincide and $\Gamma(h^{\mathrm{OR}})=0$.
\end{theorem}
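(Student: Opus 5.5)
Fix any feasible rule $h\in\mathcal H_{\boldsymbol\alpha}$. Write $a_i=P_i\{h(X)=1\}$ and $b_i=P_i\{h(X)=2\}$. The constraints are linear in these quantities:
\[
b_1\le\alpha_1(a_1+b_1),\qquad a_2\le\alpha_2(a_2+b_2),
\]
that is,
\[
(1-\alpha_1)b_1\le\alpha_1a_1,\qquad (1-\alpha_2)a_2\le\alpha_2 b_2 .
\]
The objective is $\Gamma(h)=1-\pi_1(a_1+b_1)-\pi_2(a_2+b_2)$. The plan is an exchange (Neyman--Pearson-type) argument: replace $h$ by a likelihood-ratio rule $h^L_{\ell,u}$ that weakly increases the decision probabilities while keeping the constraints. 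Concretely, keep the probabilities $b_2$ and $a_1$ fixed, move the mass assigned to class $2$ and to class $1$ toward small and large $L$ respectively, and show this only improves the relevant quantities.

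\textbf{Step 1.} Let $S_2=\{h=2\}$ and $S_1=\{h=1\}$, using randomized membership if needed. Choose $\ell$ and a boundary randomization so that the region $\{L<\ell\}$, plus the randomized part at $L=\ell$, has $P_2$-mass exactly $b_2$. Such a choice exists because $t\mapsto P_2\{L<t\}$ is monotone, and randomization fills its jumps. By the Neyman--Pearson lemma, among all (randomized) sets with $P_2$-mass $b_2$, this one minimizes $P_1$-mass. So $b_1$ weakly decreases.

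Similarly, choose $u$ so that the region $\{L>u\}$, plus randomization at $L=u$, has $P_1$-mass $a_1$. This maximizes $P_1$-mass per unit of $P_2$-mass, so $a_2$ weakly decreases.

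\textbf{Step 2.} Check disjointness, i.e.\ $\ell\le u$. Here we use $a_1+b_1\le1$ and $a_2+b_2\le 1$. The lower region is the smallest-$L$ set of $P_2$-mass $b_2$, and the upper region is the largest-$L$ set of $P_1$-mass $a_1$. If they overlapped, then on the overlap $L\in[u,\ell]$, and one can derive a contradiction with feasibility of the original disjoint sets. The cleanest route I would try is different, though: first handle the overlap by noting that if the two LR-regions would overlap, then there exists a rule with no indecision satisfying both constraints. Namely, split at a single threshold $\ell=u$ with randomization. This also proves the last sentence of the theorem, since in that case $\Gamma=0$.

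\textbf{Step 3.} Verify the constraints and the objective. After the replacement, $a_1$ and $b_2$ are unchanged while $b_1$ and $a_2$ weakly decrease. The first constraint $(1-\alpha_1)b_1\le\alpha_1a_1$ still holds, and likewise the second. Coverage may have dropped, because $b_1$ and $a_2$ decreased. To fix this, enlarge the regions: increase $b_2$ (lower $\ell$ upward) and $a_1$ (raise $u$ downward) until a constraint binds or the regions meet. Enlarging an LR-optimal region increases both its masses, while the error ratio for the opposite class stays controlled. The point is that along the NP frontier the ratio $b_1/(a_1+b_1)$ is monotone in the extent of the lower region, by the concavity of the ROC curve. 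With the constraints as written, the coverage after enlargement is at least $q_i(h)$, because the original point $(a_i,b_i)$ is dominated.

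\textbf{Step 4 (existence of the minimizer).} The main obstacle is making the domination in Step 3 rigorous. Rather than a sequential argument, I would reformulate the problem as a linear program over the achievable set
\[
\{(a_1,b_1,a_2,b_2)\}.
\]
This set is convex and compact, since randomized rules give a weak-* compact set of test functions $\phi_1,\phi_2\ge0$ with $\phi_1+\phi_2\le1$. The constraints and objective are linear, so a minimizer exists.

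Then apply Lagrangian/KKT reasoning. With nonnegative multipliers $\lambda_1,\lambda_2$, the Lagrangian is pointwise linear in $(\phi_1,\phi_2)$, with weights
\[
\phi_1:\ \pi_1f_1+\lambda_1\alpha_1f_1-\lambda_2(1-\alpha_2)f_2,\qquad
\phi_2:\ \pi_2f_2+\lambda_2\alpha_2f_2-\lambda_1(1-\alpha_1)f_1 .
\]
Maximizing pointwise selects action $1$ when $L$ exceeds one threshold, action $2$ when $L$ is below another, and indecision otherwise. This is exactly the form \eqref{eq:oracle}, with randomization on ties. Strong duality holds by the Slater-type fact that the all-indecision rule is feasible, which gives the optimum in this form.

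Finally, check $\ell\le u$ from the pointwise comparison of the two weights. When both weights are positive simultaneously, the maximizer chooses a definitive action, giving the $\Gamma=0$ case. The precise duality justification is the step I expect to need the most care.
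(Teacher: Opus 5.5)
Your existence argument in Step 4 (weak-$*$ compactness of the pairs $(\phi_1,\phi_2)$, so the achievable set of classwise decision probabilities is compact and convex) is sound, but neither of your routes to the threshold form goes through.

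\textbf{The exchange route.} In Steps 1--3 you hold fixed the class-conditional \emph{correct} masses $P_1\{h=1\}$ and $P_2\{h=2\}$. The Neyman--Pearson lemma then lowers both error masses, so coverage can drop. The repair in Step 3 (enlarge until a constraint binds, ``the original point is dominated'') is asserted rather than proved, and it is essentially the whole content of the theorem.

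\textbf{The duality route.} This has three concrete defects. (i) The all-indecision rule meets both constraints with equality and is an extreme point of the achievable set, so it is not a Slater point. Multipliers can genuinely fail to exist. Take $f_2=1$, $f_1(x)=\tfrac25(1+3x)$ on $[0,1]$ and $\alpha_1=\alpha_2=1/3$. Then $\operatorname{ess\,inf}L/\operatorname{ess\,sup}L=\alpha_1\alpha_2/\{(1-\alpha_1)(1-\alpha_2)\}=1/4$, only the all-indecision rule is feasible, and no $\lambda_1,\lambda_2\ge0$ make it a Lagrangian maximizer. (ii) Your weights drop the cross terms, since coverage also counts erroneous decisions. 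The $\phi_1$ weight is $(\pi_1+\lambda_1\alpha_1)f_1+\{\pi_2-\lambda_2(1-\alpha_2)\}f_2$, and the $\phi_2$ weight is $\{\pi_1-\lambda_1(1-\alpha_1)\}f_1+(\pi_2+\lambda_2\alpha_2)f_2$. (iii) With the correct weights, suppose both constraints are slack at the optimum, as with well-separated classes where the optimum has $\Gamma=0$. Complementary slackness gives $\lambda_1=\lambda_2=0$, both weights equal $\pi_1f_1+\pi_2f_2$, and every full-coverage rule maximizes the Lagrangian. Pointwise maximization then yields no threshold structure, so the final claim of the theorem is not obtained either.

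\textbf{The missing idea.} Hold fixed the \emph{marginal} assignment probabilities $a_j=\E\{p_j(X)\}$ under the mixture, rather than the class-conditional ones. The paper reassigns class 1 to the upper tail of the posterior $\vartheta(x)=\Prob(Y=1\mid X=x)$ with marginal mass $a_1$, and class 2 to its lower tail with mass $a_2$, randomizing at the boundaries. These fit together because $a_1+a_2\le1$, and coverage $a_1+a_2$ is unchanged. Because $\pi_1P_1g=\E\{\vartheta(X)g(X)\}$ and $\pi_2P_2g=\E[\{1-\vartheta(X)\}g(X)]$, the pointwise inequality $(\vartheta-b)(\widetilde p_1-p_1)\ge0$ together with $\E\widetilde p_1(X)=\E p_1(X)$ gives $P_1\widetilde p_1\ge P_1p_1$ and $P_2\widetilde p_1\le P_2p_1$ at once. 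The same holds symmetrically for class 2, so both selective errors weakly decrease at no cost in coverage, and no duality is needed.

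Existence then follows by compactness in $(a_1,a_2)$, using the Lipschitz bound $|P_i\widetilde p_j(\cdot;a_j)-P_i\widetilde p_j(\cdot;a_j')|\le|a_j-a_j'|/\pi_i$. Alternatively, you could keep your Step 4 compactness to obtain an optimal rule and then apply this rearrangement to it. When a feasible rule has no indecisions, $a_2=1-a_1$ forces the two posterior thresholds to coincide, which gives the last claim.
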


A related characterization in Theorem 2 of \citet{ndaoud2026ask} minimizes selective type II error at fixed indecision mass and selective type I level. Theorem~\ref{thm:oracle} establishes existence and the threshold representation directly for \eqref{eq:population}, including randomized boundary decisions.

Any strictly increasing transformation of $L$, including the posterior probability $\Prob(Y=1\mid X=x)$, gives the same decisions after adjusting the thresholds. This motivates learning a score that approximates this ordering and calibrating its thresholds to meet both error requirements.
\endgroup

\subsection{The data-driven objective}
\label{subsec:objective}

The oracle rule provides a benchmark for \eqref{eq:population}, but depends on the unknown class-conditional distributions $P_1,P_2$. In practice, both the score and its thresholds must be learned from data. The resulting population selective errors are random, and controlling their empirical counterparts does not ensure that both population constraints are satisfied.

Following the finite-sample NP framework \citep{rigollet2011neyman,tong2018neyman,yao2023asymmetric}, we seek a data-driven rule $\widehat h$ that meets both prescribed error limits with high probability. Specifically, for a user-specified failure tolerance $\delta\in(0,1)$, we require
\begin{equation}
 \Prob\!\left\{
 R_1(\widehat h)>\alpha_1
 \ \text{or}\
 R_2(\widehat h)>\alpha_2
 \right\}\le\delta,
 \label{eq:joint-control}
\end{equation}
where the probability is taken over the data and any randomness used to construct the rule. Our objective is to establish this joint guarantee in finite samples while keeping the population indecision rate $\Gamma(\widehat h)$ as small as possible.

\section{Methodology}
\label{sec:methodology}

The oracle analysis suggests assigning class labels to the two tails of a classification score and allowing indecisions in between. To implement this approach, we split $\mathcal D^{\mathrm{obs}}$ into a training sample $\mathcal D^{\mathrm{train}}$ and an independent calibration sample $\mathcal D^{\mathrm{cal}}$. Let $s:\mathcal X\to\mathbb R$ be a score fitted and tuned using only $\mathcal D^{\mathrm{train}}$, with larger values favoring class 1. Ideally, $s$ is a strictly increasing transformation of the likelihood ratio $L$. Our error guarantee requires no assumption on the accuracy of $s$ or its ordering relative to $L$. We use $\mathcal D^{\mathrm{cal}}$ to choose the two thresholds and achieve the joint error guarantee in \eqref{eq:joint-control}.

\begin{figure}[t]
 \centering
 \resizebox{0.80\textwidth}{!}{%
  \input{arxiv_figures/npi_threshold_cartoon.tex}%
 }
 \caption{\small NPI assigns the lower and upper score tails to classes 2 and 1, respectively, and returns indecisions between them. Within actual class $i$, $E_i$ and $C_i$ are the probabilities of an incorrect and a correct decision; the selective error rate is $E_i/(E_i+C_i)$, excluding indecisions.
 }
 \label{fig:npi_schematic}
\end{figure}

\subsection{Calibration for selective error control}
\label{subsec:method-scores}
\label{subsec:method-selection}

Throughout this section, we condition on the training information (the training data and any independent training randomization), so that $s$ is fixed. For a candidate threshold pair $-\infty\le\ell\le u\le+\infty$, define
\begin{equation}
 h_{\ell,u}(x)
 =2\cdot \ind\{s(x)\le\ell\}+\ind\{s(x)>u\}.
 \label{eq:candidate-rule}
\end{equation}
The rule assigns class 2 to the lower score tail, class 1 to the upper tail, and an indecision to the region $(\ell,u]$; see Figure~\ref{fig:npi_schematic}.

Conditional on the calibration labels, write $\mathcal D^{\mathrm{cal}}=\{(X_{ij},i):j=1,\ldots,n_i,\ i=1,2\}$, where the $X_{ij}\sim P_i$ are independent and $n_i\ge1$ is the number of observations in class $i$. For each pair $(\ell,u)$, let
\begin{equation}
 \begin{aligned}
 m_i(\ell,u)&=\sum_{j=1}^{n_i}
       \ind\{h_{\ell,u}(X_{ij})\ne0\},\\
 e_i(\ell,u)&=\sum_{j=1}^{n_i}
       \ind\{h_{\ell,u}(X_{ij})=3-i\},
       \qquad i=1,2.
 \end{aligned}
 \label{eq:calibration-counts}
\end{equation}
denote the numbers of decisions and errors within class $i$. The number of correct decisions is $m_i(\ell,u)-e_i(\ell,u)$.

We assess a candidate rule by testing whether each selective error is below its target. Consider first a threshold pair fixed before calibration and the hypotheses
\begin{equation}
 H_{0i}:R_i(h_{\ell,u})\ge\alpha_i
 \qquad\text{versus}\qquad
 H_{1i}:R_i(h_{\ell,u})<\alpha_i,\quad i=1,2.
 \label{eq:calibration-hypotheses}
\end{equation}
Rejecting both null hypotheses provides evidence that the rule meets both error requirements. Conditional on $m_i(\ell,u)=m$, the error count $e_i(\ell,u)$ follows a $\operatorname{Bin}\{m,R_i(h_{\ell,u})\}$ distribution.

For $e$ errors and $k$ correct decisions, we compare an error rate $r<\alpha$ with the null boundary $\alpha$ through the binomial likelihood ratio of $r$ to $\alpha$. Averaging these likelihood ratios over $r\in(0,\alpha)$ gives the calibration statistic
\begin{equation}
 B_\alpha(e,k)
 =\frac1\alpha\int_0^\alpha
   \left(\frac r\alpha\right)^e
   \left(\frac{1-r}{1-\alpha}\right)^k\,dr.
 \label{eq:mixture-evidence}
\end{equation}
Larger values of $B_\alpha(e,k)$ provide stronger evidence for an error rate below $\alpha$: adding a correct decision increases the statistic, whereas adding an error decreases it.
\begingroup
To calibrate this statistic, let $\xi_1,\ldots,\xi_n$ be independent $\operatorname{Bernoulli}(\alpha)$ variables and write $W_m=\sum_{j=1}^m\xi_j$, with $W_0=0$. The reference process $T_m=B_\alpha(W_m,m-W_m)$ is a nonnegative martingale with $T_0=1$ \citep{shafer2011test}. Ville's inequality \citep{ville1939etude}, valid for any nonnegative martingale starting at one, gives, for $c\ge1$,
\begin{equation}
 \psi_{n,\alpha}(c)
 =\Prob\!\left\{\max_{1\le m\le n}T_m>c\right\}\le\frac1c.
 \label{eq:reference-calibration-tail}
\end{equation}
However, this bound uses only the martingale property. Here $T_m$ is a function of a Bernoulli random walk, allowing us to calculate $\psi_{n,\alpha}(c)$ exactly and obtain smaller critical values. For each decision count $m$, define the largest error count whose statistic exceeds $c$
\begin{equation}
 b_{\alpha,c}(m)
 =\max\bigl\{e\in\{0,\ldots,m\}:
                  B_\alpha(e,m-e)>c\bigr\},
 \label{eq:exact-boundary}
\end{equation}
with value $-1$ when the set is empty. In particular, $b_{\alpha,c}(0)=-1$.

\begin{lemma}
\label{lem:fixed-pair-calibration}
Fix $n\ge1$, $\alpha\in(0,1)$, and $c\ge1$. Set $p_0(0)=1$, and take $p_m(w)=0$ outside $0\le w\le m$. The recursion
\begin{equation}
 \begin{aligned}
 p_m(w)
 &=\bigl\{\alpha p_{m-1}(w-1)+(1-\alpha)p_{m-1}(w)\bigr\}
      \ind\{w>b_{\alpha,c}(m)\},\\
 \psi_{n,\alpha}(c)&=1-\sum_{w=0}^n p_n(w)
 \end{aligned}
 \label{eq:crossing-recursion}
\end{equation}
computes the crossing probability exactly, and $\psi_{n,\alpha}(c)\le1/c$. For a threshold pair fixed before calibration, if $R_i(h_{\ell,u})\ge\alpha_i$, then
\[
 \Prob\{B_{\alpha_i}(e_i,m_i-e_i)>c\}
 \le\psi_{n_i,\alpha_i}(c),
\]
conditional on the training information and calibration labels.
\end{lemma}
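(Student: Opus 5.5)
Three claims need proof: the recursion computes the crossing probability exactly, $\psi_{n,\alpha}(c)\le 1/c$, and the fixed-pair bound. The plan is to reduce each to a monotonicity property of $B_\alpha$ in the error count, together with a coupling argument.

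\textbf{Step 1: monotonicity of $B_\alpha$.} For fixed $m$, the map $e\mapsto B_\alpha(e,m-e)$ is strictly decreasing on $\{0,\ldots,m\}$. Raising $e$ by one with $m$ fixed multiplies the integrand by
\[
\frac{r/\alpha}{(1-r)/(1-\alpha)}=\frac{r(1-\alpha)}{\alpha(1-r)},
\]
which is $<1$ for $r\in(0,\alpha)$. Hence $\{e: B_\alpha(e,m-e)>c\}$ is an initial segment $\{0,\ldots,b_{\alpha,c}(m)\}$. Consequently $T_m>c$ if and only if $W_m\le b_{\alpha,c}(m)$. For $m=0$ we have $T_0=1\le c$, which agrees with $b_{\alpha,c}(0)=-1$.

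\textbf{Step 2: exactness of the recursion.} By induction on $m$, $p_m(w)=\Prob\{W_m=w,\ W_k>b_{\alpha,c}(k)\ \forall k\le m\}$. The base case is $p_0(0)=1$. The inductive step conditions on $\xi_m$: a step up has probability $\alpha$ and comes from $w-1$, a step down has probability $1-\alpha$ and comes from $w$. The indicator then enforces non-crossing at time $m$. Summing over $w$ gives $\Prob\{\max_{m\le n}T_m\le c\}$, so $\psi_{n,\alpha}(c)$ equals the crossing probability.

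\textbf{Step 3: the bound $\psi\le 1/c$.} This follows from Ville's inequality once $T_m$ is shown to be a martingale. Indeed,
\[
\E[T_{m+1}\mid\mathcal F_m]=\frac1\alpha\int_0^\alpha(\cdot)\Bigl\{\alpha\cdot\tfrac r\alpha+(1-\alpha)\tfrac{1-r}{1-\alpha}\Bigr\}\,dr=T_m,
\]
since the braced factor equals one. The paper already asserts this as \eqref{eq:reference-calibration-tail}, so it can be cited.

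\textbf{Step 4: the fixed-pair statement, and the main obstacle.} Fix $(\ell,u)$ and condition on the training information and the labels. The variables $Z_j=\ind\{h(X_{ij})\neq0\}$ and error indicators $E_j$ are i.i.d. across $j$. Given the decision indicators, the errors among decided observations are i.i.d. Bernoulli$(R)$ with $R=R_i(h_{\ell,u})\ge\alpha$. The main obstacle is that $m_i$ is random and the error rate is $R$, not $\alpha$. I handle both by coupling:
\begin{itemize}
\item[(a)] Order the decided observations by index; their errors form a Bernoulli$(R)$ sequence $\eta_1,\eta_2,\ldots$ that is independent of $m_i$. To make it infinite, extend with independent draws.
\item[(b)] Couple $\eta_k\ge\xi_k$, where $\xi_k$ is Bernoulli$(\alpha)$, using a common uniform $U_k$: set $\eta_k=\ind\{U_k<R\}$ and $\xi_k=\ind\{U_k<\alpha\}$.
\end{itemize}
Then $e_i=\sum_{k\le m_i}\eta_k\ge W_{m_i}$. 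By Step 1, $B_\alpha(e_i,m_i-e_i)>c$ implies $W_{m_i}\le e_i\le b(m_i)$, so $T_{m_i}>c$. Since $m_i\le n_i$, this implies $\max_{m\le n_i}T_m>c$. The event $m_i=0$ never crosses. The probability is therefore at most $\psi_{n_i,\alpha_i}(c)$. Independence of the $\eta$ sequence from $m_i$ requires a careful check, since conditioning on $Z$ leaves the decided $X$'s i.i.d. from $P_i$ restricted to the decision region. For the case $P_i(h\ne0)=0$, note that $m_i=0$ almost surely, so the bound holds trivially.
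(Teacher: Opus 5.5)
Your proposal is correct and follows essentially the same route as the paper. It uses strict monotonicity of $B_\alpha(e,m-e)$ in $e$ to turn the recursion into a survival computation for the Bernoulli walk, and the mixture-martingale property for the $1/c$ bound (the paper proves this finite-horizon Ville bound directly by optional stopping at a bounded time rather than citing it). It then uses a common-uniform coupling showing that the candidate's error count dominates the reference count. The only cosmetic difference is that the paper conditions on $m_i=m$, bounds $\Prob\{T_m>c\}\le\psi_{n_i,\alpha_i}(c)$ and then averages, whereas you build one coupling with the reference walk independent of $m_i$ and use $T_{m_i}>c\Rightarrow\max_{m\le n_i}T_m>c$, which amounts to the same argument.
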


Choose per-class failure allowances $\eta_1,\eta_2\in(0,1)$ such that $(1-\eta_1)(1-\eta_2)\ge1-\delta$. The product form relies on the independence of the two classwise calibration samples. Our default is $\eta_1=\eta_2=1-\sqrt{1-\delta}$. We use the critical value
\begin{equation}
 c_{n,\alpha}(\eta)
 =\min\{c\ge1:\psi_{n,\alpha}(c)\le\eta\},
 \qquad c_i=c_{n_i,\alpha_i}(\eta_i).
 \label{eq:exact-cutoff}
\end{equation}
The minimum exists because $\max_{m\le n}T_m$ takes finitely many values and the strict crossing probability is right-continuous in $c$. Moreover, $1\le c_{n,\alpha}(\eta)<1/\eta$, where $1/\eta$ is the critical value supplied by Ville's inequality. Thus every count pair $(e,k)$ satisfying $B_\alpha(e,k)\ge1/\eta$ also passes the exact check, and additional count pairs may pass. The recursion costs $O(n^2)$ operations for each candidate critical value, and the critical value is obtained by a one-dimensional search. It depends only on the sample size, error target, and failure allowance, so it can be computed before examining the scores.

A pair \emph{passes calibration} if
\begin{equation}
 B_{\alpha_i}\{e_i(\ell,u),\,m_i(\ell,u)-e_i(\ell,u)\}
 >c_i,\qquad i=1,2.
 \label{eq:table-check}
\end{equation}
Equivalently, $e_i(\ell,u)\le b_{\alpha_i,c_i}\{m_i(\ell,u)\}$ for both classes. Let $\mathcal A$ denote the set of passing pairs, from which we select the final classifier.
\endgroup

\subsection{The NPI procedure}
\label{subsec:method-implementation}
\label{subsec:method-guarantee}

We select thresholds by combining the decision regions of all pairs passing calibration. Let $\mathcal G$ contain the distinct calibration scores from both classes together with $-\infty$ and $+\infty$. Thresholds on this grid represent all calibration decision patterns. Write $\mathcal A_{\mathcal G}=\mathcal A\cap(\mathcal G\times\mathcal G)$. When this set is nonempty, take the largest lower threshold and smallest upper threshold among the passing pairs:
\begin{equation}
 \ell_{\max}=\max_{(\ell,u)\in\mathcal A_{\mathcal G}}\ell,
 \qquad
 u_{\min}=\min_{(\ell,u)\in\mathcal A_{\mathcal G}}u.
 \label{eq:extreme-thresholds}
\end{equation}
If $\ell_{\max}\le u_{\min}$, these thresholds give the union of their decision regions. If $\ell_{\max}>u_{\min}$, the two tails overlap and a common threshold at $u_{\min}$ classifies every observation. Accordingly, define
\begin{equation}
 (\widehat\ell,\widehat u)
 =\bigl(\min\{\ell_{\max},u_{\min}\},\,u_{\min}\bigr),
 \qquad \widehat h=h_{\widehat\ell,\widehat u},
 \label{eq:selected-rule}
\end{equation}
and return $\widehat h\equiv0$ if $\mathcal A_{\mathcal G}$ is empty.

\begin{proposition}
\label{prop:largest-region}
If $\mathcal A_{\mathcal G}$ is nonempty, the pair in \eqref{eq:selected-rule} passes calibration and satisfies
\begin{equation}
 \{x:h_{\ell,u}(x)\ne0\}
 \subseteq\{x:\widehat h(x)\ne0\},
 \qquad (\ell,u)\in\mathcal A_{\mathcal G}.
 \label{eq:largest-decision-region}
\end{equation}
Thus $\widehat h$ maximizes decision probability among these candidates under every distribution of the score. For the same score, calibration data, grid, targets, and allowances, its decision region also contains that of every grid rule satisfying Ville's checks $B_{\alpha_i}(e_i,m_i-e_i)\ge1/\eta_i$, $i=1,2$.
\end{proposition}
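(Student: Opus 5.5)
The plan is to reduce everything to monotonicity of the calibration statistic and of the four classwise counts in the two thresholds. \emph{Step 1:} for $r\in(0,\alpha)$ we have $r/\alpha<1$ and $(1-r)/(1-\alpha)>1$. Hence the integrand in \eqref{eq:mixture-evidence} is nonincreasing in $e$ and nondecreasing in $k$, and so $B_\alpha(e,k)\le B_\alpha(e',k')$ whenever $e'\le e$ and $k'\ge k$. Consequently a pair passes the class-$i$ check \eqref{eq:table-check} whenever its error count is at most, and its correct count at least, those of some pair that already passes the class-$i$ check.

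\emph{Step 2:} separate the counts by threshold using \eqref{eq:candidate-rule}. For class 1, the errors are $e_1(\ell,u)=\#\{j:s(X_{1j})\le\ell\}$, which depends only on $\ell$ and is nondecreasing in it. The correct decisions are $m_1-e_1=\#\{j:s(X_{1j})>u\}$, which depends only on $u$ and is nonincreasing in it. For class 2, the errors $e_2=\#\{j:s(X_{2j})>u\}$ depend only on $u$ and are nonincreasing in it. The correct decisions $\#\{j:s(X_{2j})\le\ell\}$ depend only on $\ell$ and are nondecreasing in it.

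Let $(\ell_{\max},u')\in\mathcal A_{\mathcal G}$ attain $\ell_{\max}$, and let $(\ell'',u_{\min})\in\mathcal A_{\mathcal G}$ attain $u_{\min}$. Then $\widehat\ell\le\widehat u$ and both lie in $\mathcal G$, so $(\widehat\ell,\widehat u)$ is a valid grid pair.

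For class 1, compare with $(\ell_{\max},u')$. Since $\widehat\ell\le\ell_{\max}$, there are no more errors. Since $\widehat u=u_{\min}\le u'$, there are at least as many correct decisions. So Step 1 gives a class-1 pass.

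For class 2, compare with $(\ell'',u_{\min})$. The errors coincide because the upper threshold is the same. Moreover $\ell''\le\ell_{\max}$, and $\ell''\le u_{\min}$ because $(\ell'',u_{\min})$ is a valid pair. Hence $\ell''\le\widehat\ell$, so there are at least as many correct decisions, and Step 1 gives a class-2 pass. Note that each class is handled with a \emph{different} witnessing pair. This works only because each class's error count and correct count depend on separate thresholds. I expect checking this decoupling carefully, especially in the overlapping case $\ell_{\max}>u_{\min}$ where $\widehat\ell$ is truncated to $u_{\min}$, to be the main point requiring care.

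\emph{Step 3 (containment).} Take any $(\ell,u)\in\mathcal A_{\mathcal G}$; its decision region is $\{s\le\ell\}\cup\{s>u\}$.
\begin{itemize}
\item If $\ell_{\max}>u_{\min}$, then $\widehat\ell=\widehat u$ and $\widehat h$ classifies every $x$, so \eqref{eq:largest-decision-region} is trivial.
\item Otherwise $\widehat\ell=\ell_{\max}\ge\ell$ and $\widehat u=u_{\min}\le u$. Hence $\{s\le\ell\}\subseteq\{s\le\widehat\ell\}$ and $\{s>u\}\subseteq\{s>\widehat u\}$.
\end{itemize}
Set inclusion of decision regions implies $P_i\{\widehat h(X)\ne0\}\ge P_i\{h_{\ell,u}(X)\ne0\}$ for every distribution of the score, which gives the maximization claim.

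\emph{Step 4 (Ville comparison).} Recall from the discussion after \eqref{eq:exact-cutoff} that $c_i=c_{n_i,\alpha_i}(\eta_i)<1/\eta_i$. Any grid pair with $B_{\alpha_i}(e_i,m_i-e_i)\ge1/\eta_i$ therefore satisfies $B_{\alpha_i}>c_i$ for $i=1,2$, and so lies in $\mathcal A_{\mathcal G}$. Step 3 then gives the final containment.
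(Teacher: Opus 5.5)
Your proof is correct and follows the paper's argument. Both use monotonicity of $B_\alpha$ in the error and correct counts. Both compare the selected pair with the passing pair attaining $\ell_{\max}$ for class 1 and the passing pair attaining $u_{\min}$ for class 2, relying on each class's error and correct counts depending on separate thresholds. Both then use pointwise inclusion of decision regions and $c_i<1/\eta_i$ for the Ville comparison.

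The only difference is presentational. You handle the overlapping and non-overlapping cases together through $\ell''\le\min\{\ell_{\max},u_{\min}\}=\widehat\ell$. The paper splits the two cases and treats the overlapping one via the diagonal pair $(t,t)$ with $t=u_{\min}$.
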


Calibration and threshold selection use only classwise counts, so neither requires estimating class proportions. Algorithm~\ref{alg:calibration} summarizes the NPI procedure.

\begin{algorithm}[t!]
\spacingset{1}
\caption{Neyman--Pearson classification with indecisions (NPI)}
\label{alg:calibration}
\begin{algorithmic}[1]
 \Require Training sample $\mathcal D^{\mathrm{train}}$;
 independent calibration sample $\mathcal D^{\mathrm{cal}}$;
 targets $\alpha_1,\alpha_2$; joint failure tolerance $\delta$.
 \State \textbf{Fit the score.} Train and tune $s$ using
 $\mathcal D^{\mathrm{train}}$ only.
 \State \textbf{Compute the critical values.}
 If either class is absent from $\mathcal D^{\mathrm{cal}}$,
 return $\widehat h\equiv0$.
 Use $\eta_1=\eta_2=1-\sqrt{1-\delta}$ and compute
 $c_i=c_{n_i,\alpha_i}(\eta_i)$ from \eqref{eq:exact-cutoff}.
 \State \textbf{Form the candidates.} Score the calibration data;
 let $\mathcal G$ contain their distinct scores and $-\infty,+\infty$.
 \State \textbf{Assess the errors.} For each $\ell,u\in\mathcal G$
 with $\ell\le u$, count the decisions $m_i$ and errors $e_i$
 within each actual class. The pair passes calibration if
 $B_{\alpha_i}(e_i,m_i-e_i)>c_i$ for both $i=1,2$.
 \State \textbf{Combine the decision regions.} If any pair passes,
 compute $\ell_{\max}$ and $u_{\min}$ from
 \eqref{eq:extreme-thresholds} and set
 $(\widehat\ell,\widehat u)=(\min\{\ell_{\max},u_{\min}\},u_{\min})$.
 Otherwise, return $\widehat h\equiv0$.
 \State \textbf{Classify new observations.} If a pair was selected,
 assign class 2 when $s(x)\le\widehat\ell$, class 1 when
 $s(x)>\widehat u$, and an indecision otherwise.
\end{algorithmic}
\end{algorithm}

A natural concern is whether selecting the largest decision region from many passing pairs invalidates error control. Lemma~\ref{lem:fixed-pair-calibration} establishes validity only for a prespecified pair. The following theorem shows that NPI preserves joint error control after this selection.

\begingroup
\begin{theorem}
\label{thm:simultaneous}
Under the calibration sampling scheme in Section~\ref{subsec:method-scores}, suppose the measurable score $s$, targets $\alpha_i\in(0,1)$, and allowances $\eta_i\in(0,1)$ with $(1-\eta_1)(1-\eta_2)\ge1-\delta>0$ are fixed before calibration. Conditionally on the training information and calibration labels, when both class counts are positive, for each class $i$ and every $c\ge1$,
\begin{equation}
 \Prob\!\left\{
 \begin{array}{l}
 \exists\,\ell\le u:\ R_i(h_{\ell,u})>\alpha_i,\\
 B_{\alpha_i}\{e_i(\ell,u),m_i(\ell,u)-e_i(\ell,u)\}>c
 \end{array}\right\}
 \le\psi_{n_i,\alpha_i}(c).
 \label{eq:sharp-classwise-control}
\end{equation}
Equality holds if the class-$i$ score distribution is continuous. Consequently, with conditional probability at least $(1-\eta_1)(1-\eta_2)$, every pair passing \eqref{eq:table-check} satisfies both population error limits. In particular, Algorithm~\ref{alg:calibration} returns a classifier satisfying
\[
 \Prob\!\left\{
 R_1(\widehat h)\le\alpha_1,\quad
 R_2(\widehat h)\le\alpha_2
 \right\}
 \ge(1-\eta_1)(1-\eta_2)\ge1-\delta.
\]
The last bound also holds conditional on the training information alone, including when a class is absent and the algorithm returns only indecisions.
\end{theorem}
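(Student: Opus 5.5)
The plan is to prove \eqref{eq:sharp-classwise-control} by coupling every threshold pair in class $i$ to one Bernoulli$(\alpha_i)$ random walk. I take $i=1$; class $2$ is symmetric, with the roles of the tails swapped. Write $\alpha=\alpha_1$ and $F$ for the class-1 score distribution. I first reduce to a continuous $F$. I would attach to each $X_{1j}$ an independent uniform tie-breaker and work with the lexicographic score $(s(X_{1j}),U_j)$. Every threshold set $\{s\le\ell\}$ or $\{s>u\}$ is still a tail set under this finer order. I expect the inequality for the finer family of tail sets to imply it for the original family, so it suffices to treat continuous $F$.

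\textbf{Nested reference regions.} For $t\in[0,\alpha]$, let $D_t$ be the union of two tails: the lower tail of $P_1$-mass $t$ (these are errors) and the upper tail of $P_1$-mass $t(1-\alpha)/\alpha$ (these are correct decisions). This needs $t+t(1-\alpha)/\alpha=t/\alpha\le1$, which holds for $t\le\alpha$. The sets $D_t$ are nested. Each has $P_1$-mass $t/\alpha$, and each increment $D_{t+dt}\setminus D_t$ contains errors in proportion exactly $\alpha$. Define $V(x)=\inf\{t:x\in D_t\}$.

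Order the calibration points of class 1 by $V$. I want to show that the error/correct labels in this order form i.i.d.\ Bernoulli$(\alpha)$ variables $\xi_1,\xi_2,\ldots$. The points outside $D_\alpha$ are placed last, and the walk is truncated at the number of points inside $D_\alpha$, which is at most $n_1$. The label claim holds because, given $V=t$, a point lies in the lower tail with probability $\alpha$ by the increment property, and the pairs $(V,\text{label})$ are i.i.d. Setting $W_m=\sum_{j\le m}\xi_j$, the process $T_m=B_\alpha(W_m,m-W_m)$ is then exactly the reference martingale of Lemma~\ref{lem:fixed-pair-calibration}, run on the same data for every pair simultaneously.

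\textbf{Domination of a violating pair.} Take any pair with $R_1(h_{\ell,u})>\alpha$. Let $a=P_1(s\le\ell)$ and $b=P_1(s>u)$. The violation means $a>\alpha(a+b)$, i.e.\ $b<a(1-\alpha)/\alpha$; if $a=0$ there is no violation, so $a>0$. If $a\le\alpha$, the error set of $h_{\ell,u}$ equals the error part of $D_a$, and its correct set is contained in the correct part of $D_a$. If $a>\alpha$, then $b<1-a$ and I would instead compare with $D_\alpha$ extended through the lower tail; this edge case needs checking, but the same inclusion idea applies.

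Let $M$ be the number of class-1 calibration points in $D_a$. Then $D_a$ is exactly the set of the first $M$ points in the $V$-order, so $e_1(\ell,u)=W_M$ and $m_1(\ell,u)-e_1(\ell,u)\le M-W_M$. Since $B_\alpha(e,k)$ is increasing in $k$, we get $B_\alpha(e_1,m_1-e_1)\le T_M\le\max_{m\le n_1}T_m$. The degenerate case $m_1=0$ gives $B=1\le c$, so it never triggers the event. Hence the event in \eqref{eq:sharp-classwise-control} is contained in $\{\max_{m\le n_1}T_m>c\}$, whose probability is $\psi_{n_1,\alpha}(c)$. For equality under a continuous $F$, I take pairs whose tail masses approach those of $D_{V_{(m)}}$ from the violating side, for instance by shrinking the upper tail slightly. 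Each such pair reproduces $(W_m,m-W_m)$ exactly, so whenever the walk crosses $c$, some violating pair crosses $c$ too.

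\textbf{Joint guarantee.} Setting $c=c_i$ and using \eqref{eq:exact-cutoff}, the bad event in class $i$ has conditional probability at most $\eta_i$. The two classwise calibration samples are conditionally independent, so with probability at least $(1-\eta_1)(1-\eta_2)$ every passing pair satisfies both population limits. By Proposition~\ref{prop:largest-region}, $\widehat h$ is itself a passing pair, so it satisfies both limits on this event. If nothing passes, or if a class is absent, then $\widehat h\equiv0$ and $R_i=0$ by the convention $0/0=0$. Averaging over the calibration labels gives the bound conditional on the training information alone.

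\textbf{Main obstacle.} The hardest step is the coupling: showing that one $V$-ordering makes the labels i.i.d.\ Bernoulli$(\alpha)$ while also exhibiting every violating pair as dominated by an initial segment of that order. The tie-breaking and the region $a>\alpha$ are where details could break. My first fallback there is to randomize the boundary masses of $D_t$ so that the increment proportion is exactly $\alpha$.
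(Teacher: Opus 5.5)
Your upper bound and the joint guarantee go through, and your coupling is the paper's: your entry time $V$ is $\alpha$ times the paper's depth $D_j$, so the nested regions $D_t$ are its expanding tails at $d=t/\alpha$. Where you differ is the comparison step. You compare a violating pair with the reference at time $M$, the number of points in $D_a$. This keeps the error count and adds correct decisions, using that $B_\alpha$ increases in $k$. The paper instead compares at the pair's own count $m$. Taking $d=v/(1-\alpha)$, every correct decision precedes every indecision in depth order, so the $k$ correct decisions lie among the first $m$ points and $W_m\le e$. That version needs no case split.

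Your case $a>\alpha$ does close, though not by extending $D_\alpha$, which already has full mass. Compare with $T_{n_1}$ instead. The error tail of $D_\alpha$ has mass $\alpha<a$, so $W_{n_1}\le e_1$, and $k\le n_1-e_1\le n_1-W_{n_1}$. Then use that $B_\alpha$ decreases in $e$ and increases in $k$. Your tie-breaking reduction is also valid, since each original pair embeds in the refined family with the same population error and calibration counts. The paper instead uses $S_j=F^{-1}(U_j)$ and tracks weak versus strict inequalities. It also adds a countable dyadic reduction for measurability, which you would still need for the equality claim.

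The genuine gap is in the equality step. Shrinking the correct tail of $D_{V_{(m)}}$ reproduces $(W_m,m-W_m)$ only when the $m$-th point in depth order is an error. If that point is a correct decision, it sits exactly on the boundary of the correct tail. Any shrinkage drops it, and the pair realizes $T_{m-1}$ instead of $T_m$. This is the typical situation, not a corner case. $T$ increases only at steps with label zero, so the first crossing of $c$ always occurs at such a step. For example, take $B_\alpha(0,n_1-1)\le c<B_\alpha(0,n_1)$. Then the walk crosses only on the all-correct path at $m=n_1$, an event of probability $(1-\alpha)^{n_1}>0$, and your construction yields no passing violating pair there.

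The perturbation must go the other way, as in the paper's Step 5. Enlarge the error tail to mass $\alpha d+\epsilon$ and the correct tail to $(1-\alpha)d+\epsilon'$, with $0<\epsilon'<\epsilon(1-\alpha)/\alpha$. Choose both so that no calibration point enters the added slivers. This keeps exactly the first $m$ points, including a boundary point under the strict inequality in \eqref{eq:candidate-rule}. The violation is then strict, since $(1-\alpha)w-\alpha v=(1-\alpha)\epsilon-\alpha\epsilon'>0$.
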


The key is to couple all candidate rules to a common reference sequence within each class. The two-tail structure produces independent Bernoulli$(\alpha_i)$ indicators whose mixture likelihood ratios form a single martingale. The calibration statistic of every pair with $R_i(h_{\ell,u})>\alpha_i$ is bounded above by this martingale at the pair's decision count. Thus, if any violating pair passes, the common process must cross $c_i$ within $n_i$ steps. The recursion in Lemma~\ref{lem:fixed-pair-calibration} calculates the probability of this event, accounting for all decision counts and threshold pairs together. Equality concerns all real threshold pairs and one class's check; the selected rule must lie on the calibration grid and pass both checks, so its violation probability can be smaller. Under the same conditioning, the two classwise reference processes are independent. Multiplying their noncrossing probabilities gives the joint guarantee. The detailed proof is given in Appendix Section~\ref{app:calibration-proof}. Appendix Section~\ref{app:general-boundaries} extends the count comparison to other prespecified calibration boundaries.
\endgroup

The guarantee remains valid under label shift: class proportions may differ between training, calibration, and target populations, while the conditional distributions $P_i$ remain fixed \citep{lipton2018labelshift}. Both selective errors depend only on $P_i$, so the same calibration event controls them for every choice of positive target proportions. Proposition~\ref{prop:largest-region} also gives maximal coverage among passing grid rules for every such target population.

Our sampling model assumes that calibration and future feature vectors from class $i$ are independent draws from the same distribution $P_i$. Appendix Section~\ref{app:exchangeability} extends the guarantee to an infinitely exchangeable labeled sequence, using a fixed initial training block and the first prespecified number of subsequent observations from each class for calibration.

\subsection{The cost of calibration}
\label{subsec:calibration-efficiency}

For a given score $s$, the ideal benchmark uses the known class-conditional score distributions to minimize population indecision subject to both error constraints. NPI selects only pairs that pass the calibration tests in \eqref{eq:table-check}, which can exclude pairs satisfying both population constraints. We quantify the resulting additional indecision relative to this ideal benchmark.

As before, we condition on the training information and hold $s$ fixed. Using the coverage $C(h)$ defined in Section~\ref{subsec:errors}, the ideal coverage is
\begin{equation}
 C_s^\star
 =\sup_{\ell\le u}
 \bigl\{C(h_{\ell,u}):R_i(h_{\ell,u})\le\alpha_i,\ i=1,2\bigr\}.
 \label{eq:score-coverage-oracle}
\end{equation}
When this supremum is attained, denote an ideal pair by $(\ell_s^\star,u_s^\star)$ and its rule by $h_s^\star=h_{\ell_s^\star,u_s^\star}$. The difference $C_s^\star-C(\widehat h) =\Gamma(\widehat h)-\Gamma(h_s^\star)$ measures the additional indecisions due to calibration. Let $F_i(z)=P_i\{s(X)\le z\}$, $n_{\min}=\min(n_1,n_2)$, and $c_{\max}=\max(c_1,c_2)$.

\begingroup
\begin{theorem}
\label{thm:coverage-oracle}
Under the assumptions of Theorem~\ref{thm:simultaneous}, suppose an ideal pair $(\ell_s^\star,u_s^\star)$ exists with finite thresholds and $q_i(h_s^\star)>0$, $i=1,2$. Assume that each $F_i$ is continuously differentiable in neighborhoods of both thresholds. Suppose there is a direction $v=(v_\ell,v_u)$ such that $\ell_s^\star+t v_\ell\le u_s^\star+t v_u$ for all sufficiently small $t>0$, and
\begin{equation}
 \left.\frac{d}{dt}
 R_i\bigl(h_{\ell_s^\star+t v_\ell,\,u_s^\star+t v_u}\bigr)
 \right|_{t=0+}<0
 \quad\text{for each }i\text{ with }R_i(h_s^\star)=\alpha_i.
 \label{eq:local-oracle-regularity}
\end{equation}
Let $\widehat h$ be returned by Algorithm~\ref{alg:calibration}. There exist constants $K,c_0>0$, depending only on the fixed score distributions, error targets, and population proportions, such that, for any $\beta\in(0,1-\delta)$ and $n_{\min}\ge2$ satisfying $\log\{n_{\min}c_{\max}/\beta\}/n_{\min}\le c_0$, with probability at least $1-\delta-\beta$ conditional on the training information and calibration labels, the selected rule satisfies $R_i(\widehat h)\le\alpha_i$, $i=1,2$, and
\begin{equation}
 0\le C_s^\star-C(\widehat h)
 \le K\sqrt{\frac{\log\{n_{\min}c_{\max}/\beta\}}
                    {n_{\min}}}.
 \label{eq:coverage-oracle-bound}
\end{equation}
\end{theorem}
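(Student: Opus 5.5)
We intersect two high-probability events and then exhibit a passing grid pair close to the ideal pair. The first event is $\mathcal E_1$ from Theorem~\ref{thm:simultaneous}: every pair passing \eqref{eq:table-check} satisfies both population limits. It has conditional probability at least $1-\delta$. On $\mathcal E_1$, the selected rule $\widehat h$ passes by Proposition~\ref{prop:largest-region}, so $R_i(\widehat h)\le\alpha_i$. Since $C_s^\star$ is a supremum over all feasible pairs, this also gives the lower bound $0\le C_s^\star-C(\widehat h)$. The second event $\mathcal E_2$, of probability at least $1-\beta$, is a uniform concentration event for the empirical class-conditional counts. By DKW (Massart) applied to each $F_i$, with $\beta/4$ allotted to each class and tail type, we have $\sup_z|\widehat F_i(z)-F_i(z)|\le\epsilon_i:=\sqrt{\log(4/\beta)/(2n_i)}$. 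Under this bound the normalized decision and error counts $m_i/n_i$ and $e_i/n_i$ for every pair $(\ell,u)$ are within $2\epsilon_i$ of $q_i(h_{\ell,u})$ and $q_i(h_{\ell,u})R_i(h_{\ell,u})$. A union bound gives probability at least $1-\delta-\beta$ for $\mathcal E_1\cap\mathcal E_2$.

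Next, we build a perturbed pair that passes with slack. Set $t_n=A\sqrt{\log(n_{\min}c_{\max}/\beta)/n_{\min}}$ with a constant $A$ to be chosen, and let $(\ell_t,u_t)=(\ell_s^\star+t_nv_\ell,\,u_s^\star+t_nv_u)$. Condition \eqref{eq:local-oracle-regularity} and the $C^1$ assumption give, for every class $i$ with an active constraint, $R_i(h_{\ell_t,u_t})\le\alpha_i-\kappa t_n$ for some $\kappa>0$ and all small $t_n$; the smallness requirement is what $c_0$ encodes. Inactive constraints have a fixed positive gap, which survives small perturbations by continuity. Coverage moves at most linearly, so $C(h_{\ell_t,u_t})\ge C_s^\star-K_1t_n$, and $q_i(h_{\ell_t,u_t})\ge q_i(h_s^\star)/2$. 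We then snap the thresholds to the grid: take the largest calibration score at most $\ell_t$ and the smallest at least $u_t$. This produces identical calibration decisions, so $(\ell_t,u_t)$ itself may be used for the check, while the grid representative's decision region is contained in that of $\widehat h$ by Proposition~\ref{prop:largest-region}. On $\mathcal E_2$, the empirical error fraction $\widehat r_i=e_i/m_i$ is then at most $\alpha_i-\kappa t_n+O(\epsilon_i)$, and $m_i\gtrsim n_i q_i(h_s^\star)$.

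The main obstacle is converting this empirical margin into the passing condition $B_{\alpha_i}(e_i,m_i-e_i)>c_i$. To handle it, lower-bound the mixture \eqref{eq:mixture-evidence} by restricting the integral to $r\in[\widehat r_i,\widehat r_i+1/m_i]$. With $\mathrm{KL}$ denoting the Bernoulli Kullback--Leibler divergence, this gives
\[
\log B_{\alpha_i}(e,k)\ \ge\ m\,\mathrm{KL}(\widehat r\,\|\,\alpha)-\log(m/\alpha)-O(1),
\]
and Pinsker-type bounds give $\mathrm{KL}(\widehat r\|\alpha)\ge(\alpha-\widehat r)^2/2$. Plugging in $\alpha_i-\widehat r_i\ge\kappa t_n/2$ yields $\log B\ge c'n_{\min}t_n^2-\log n_{\min}-O(1)$. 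For $A$ large enough this exceeds $\log c_{\max}$; here we use $c_i<1/\eta_i$, which keeps $\log c_{\max}$ absorbed in the logarithmic term. We also need $\epsilon_i\ll\kappa t_n$, which holds once $A$ dominates the DKW constant. Hence the perturbed pair passes, and $C(\widehat h)\ge C(h_{\ell_t,u_t})\ge C_s^\star-K_1t_n$. This gives \eqref{eq:coverage-oracle-bound} with $K=K_1A$. The constants depend only on the score distributions through $\kappa$, $q_i(h_s^\star)$, the derivative bounds, $\alpha_i$, and $\pi_i$, as required.
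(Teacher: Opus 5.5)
Your overall route is essentially the paper's: perturb the ideal pair along $v$ to create an error margin of order $t_n$, show the perturbed rule passes using a lower bound on $B_{\alpha_i}$, move it to the grid, apply Proposition~\ref{prop:largest-region}, and intersect with the event of Theorem~\ref{thm:simultaneous}. Two of your substitutions are acceptable. Using the uniform DKW event to control the counts can replace the paper's Chernoff--Hoeffding bound on the fixed rule $h_t$. Your KL/Pinsker lower bound on the mixture can replace the paper's bound $B_\alpha(e,k)\ge\frac{\Delta}{2\alpha}\exp(m\Delta^2/2)$.

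The gap is in the final coverage transfer, where you conclude $C(\widehat h)\ge C(h_{\ell_t,u_t})$. Proposition~\ref{prop:largest-region} only says that $\widehat h$'s decision region contains those of passing \emph{grid} pairs. Since $(\ell_t,u_t)$ is generically off the grid, you only get $C(\widehat h)\ge C(h^{\mathcal G})$ for a grid representative $h^{\mathcal G}$. Snapping preserves calibration counts but not population coverage. Nothing in your argument bounds the class-$i$ mass lost between the snapped lower threshold and $\ell_t$, and the claimed inequality is false in general.

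Your upper snap is also in the wrong direction. Under the convention $s(x)>u$, replacing $u_t$ by the smallest calibration score $u'\ge u_t$ turns the calibration point at $u'$ from a decision into an indecision. So calibration decisions are not preserved: for class 1 this removes a correct decision and can make the check fail. The same move also shrinks the upper tail. You need to snap both thresholds down, $z^{\mathcal G}=\max\{g\in\mathcal G:g\le z\}$, which preserves every comparison with calibration scores, including ties, and the ordering $\ell\le u$.

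The repair uses the event $\mathcal E_2$ you already have. No calibration score lies in $(z^{\mathcal G},z]$, so $\widehat F_i(z^{\mathcal G})=\widehat F_i(z)$. On $\mathcal E_2$ this gives $|F_i(z^{\mathcal G})-F_i(z)|\le 2\epsilon_i$, and hence $|q_i(h^{\mathcal G})-q_i(h_{\ell_t,u_t})|\le 4\epsilon_i$. Since $h^{\mathcal G}$ has the same counts as $h_{\ell_t,u_t}$, it passes, and you obtain $C(\widehat h)\ge C_s^\star-K_1t_n-4\sum_i\pi_i\epsilon_i$. The extra term is within the stated rate, because $\log(4/\beta)\le 2\log n_{\min}+\log(1/\beta)\le 2\log\{n_{\min}c_{\max}/\beta\}$ when $n_{\min}\ge2$ and $c_{\max}\ge1$. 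This is exactly the paper's Step~4.
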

\endgroup

The bound quantifies the coverage cost through $c_{\max}$, which is smaller under exact calibration than under Ville's inequality. For fixed failure allowances, the bound is of order $\sqrt{\log n_{\min}/n_{\min}}$. The proof first moves the ideal thresholds slightly to bring both errors below their targets. Condition~\eqref{eq:local-oracle-regularity} ensures that an error margin of size $t$ costs at most $O(t)$ in coverage. Taking $t$ of the order in \eqref{eq:coverage-oracle-bound} makes this nearby rule pass calibration with high probability. Rounding its thresholds to the calibration grid preserves all calibration decisions, and concentration of the classwise score distributions bounds the resulting population coverage loss. The selected decision region contains that of this grid rule. The detailed proof is given in Appendix Section~\ref{app:coverage-proof}.

\begin{remark}
\label{rem:local-oracle-regularity}
Condition~\eqref{eq:local-oracle-regularity} requires a common movement of the thresholds that reduces every error rate at its target. It holds trivially when neither constraint is active at the ideal pair. For distinct ideal thresholds, linear independence of the gradients of the active error rates is sufficient. The condition is sufficient rather than necessary, but some control of the coverage lost in creating an error margin is essential for the stated rate. Smooth score distributions alone do not suffice: if both classes have the same uniform score distribution and $\alpha_1=\alpha_2=1/2$, the ideal rule classifies everyone, yet no rule with positive coverage satisfies both constraints strictly. The example in Appendix Section~\ref{app:local-regularity-example} shows that NPI then returns only indecisions with probability at least $1-\delta$, regardless of sample size. These local assumptions concern the coverage bound; Theorem~\ref{thm:simultaneous} requires none of them.
\end{remark}

\section{Simulations}
\label{sec:simulations}

We evaluate joint error control and indecision rates in two settings: Gaussian classes with varying overlap and a nonlinear classification problem with concentric rings. We compare NPI with NP umbrella \citep{tong2018neyman}, the method of \citet{ndaoud2026ask}, and two implementations of LTT \citep{angelopoulos2025learn}.

\subsection{Simulation design and competing methods}
\label{subsec:simulation-design}

At each separation level, we conduct 5,000 independent repetitions. Each repetition uses independent training and calibration samples, each balanced between the two classes. Sample sizes are specified below, and the target population has class proportions $\pi_1=\pi_2=0.5$. All five procedures share the same fitted score and samples within each repetition.

We set $\alpha_1=\alpha_2=0.10$. NPI uses a joint failure tolerance $\delta=0.05$, with $\eta_1=\eta_2=1-\sqrt{0.95}\approx0.02532$ and the exact critical values in \eqref{eq:exact-cutoff}. We implement NP umbrella using the calibration-rank rule of \texttt{nproc} \citep{tong2018neyman}, applied to the shared score. It controls type I error with failure tolerance $0.05$; it makes no indecisions and imposes no type II error constraint. At its threshold $t_{\mathrm{NP}}$, scores below $t_{\mathrm{NP}}$ are assigned to class 2 and scores at or above $t_{\mathrm{NP}}$ to class 1, retaining the \texttt{nproc} convention when scores are tied. The method of \citet{ndaoud2026ask} uses their Algorithm~2 with both empirical error targets set to $0.10$. If no empirically feasible rule is found, we return indecisions everywhere.

The LTT framework can be adapted to our problem by testing both selective error constraints for each candidate threshold pair. We consider LTT Holm and LTT sequence, which use Holm's adjustment \citep{holm1979simple} and fixed-sequence testing, respectively, at family-wise error level $\delta=0.05$. Implementation details are given in Appendix Section~\ref{app:ltt-implementation}.

\paragraph{Evaluation.}
Conditional on the fitted score and selected thresholds, we calculate the population selective errors and indecision probability directly from the generating distributions. No test sample is used. For a two-threshold rule, the classwise score distributions $F_i$ give decision probabilities $q_i=F_i(\ell)+1-F_i(u)$, the two errors are $R_1=F_1(\ell)/q_1$ and $R_2=\{1-F_2(u)\}/q_2$, and the indecision probability is $1-\sum_i\pi_iq_i$. Joint violation frequencies count repetitions in which either population error exceeds its target. All repetitions enter the error distributions, their quantiles, the joint violation frequencies, and the mean indecision summaries. When a rule makes no decisions in a class, its error is zero under the convention $0/0=0$ in \eqref{eq:risks}.

A population oracle provides a reference by minimizing indecisions subject to both error constraints using the known class distributions. It has no sampling variability. Differences from the oracle can reflect score approximation, finite-sample calibration, and restrictions on the candidate rules.

\subsection{Gaussian separation}
\label{subsec:gaussian-simulation}

We generate
\[
 X\mid Y=1\sim N(d/2,1),
 \qquad
 X\mid Y=2\sim N(-d/2,1),
\]
and display results for $d\in\{1.5,2.0,2.5,3.0,3.5\}$. The training and calibration samples each contain 1,000 observations, with 500 per class. We fit a linear discriminant analysis (LDA) score \citep{hastie2009elements,James2013} and evaluate population errors using normal probabilities. The oracle uses the likelihood ratio $L(x)=\exp(dx)$.

\begin{figure}[!t]
    \centering
    \resizebox{\textwidth}{!}{%
        \input{arxiv_figures/np_gaussian_ltt.tex}%
    }
    \caption{\small Gaussian simulation with shared LDA scores and 5,000 repetitions per separation. Type I errors appear above type II errors; the main type II panel is enlarged, with a smaller full-range panel below. Zero errors are shown as horizontal bars; bar lengths and nonzero violin widths are scaled by their respective proportions of repetitions. Filled and open markers indicate medians and 95th percentiles; arrows mark off-scale values. Dashed lines mark the $10\%$ target and purple segments show the population oracle. The table reports population indecision percentages, averaged over all repetitions for the fitted procedures. NP umbrella's zero-indecision row is omitted.
    }
    \label{fig:np-gaussian-separation}
\end{figure}

\begin{figure}[!t]
    \centering
    \resizebox{\textwidth}{!}{%
        \input{arxiv_figures/np_rings_ltt.tex}%
    }
    \caption{\small Concentric-ring simulation with shared AdaBoost scores, 200 training and 200 calibration observations, and 5,000 repetitions per separation. Plotting conventions and indecision summaries follow Figure~\ref{fig:np-gaussian-separation}.
    }
    \label{fig:np-rings-separation}
\end{figure}

Figure~\ref{fig:np-gaussian-separation} shows comparable indecision rates for NPI and LTT sequence, with greater indecision for LTT Holm. At $d=2$, their mean indecision percentages are $42.2\%$, $42.6\%$, and $69.2\%$, respectively, compared with $19.4\%$ for the oracle. At $d=3$, these percentages fall to $1.7\%$, $1.4\%$, and $5.9\%$, while the oracle classifies every observation.

This Gaussian setting favors LTT sequence: the correctly specified LDA score closely follows the likelihood-ratio ordering, supporting an informative training-based ordering of threshold pairs. With one starting point, the sequence tests each pair at level $\delta$ and stops at the first nonrejection \citep{angelopoulos2025learn}. Holm's adjustment imposes smaller testing levels across the candidate collection, making the calibration checks more demanding and increasing indecision. NPI requires no such multiplicity adjustment (Theorem~\ref{thm:simultaneous}).

Across the displayed separations, the largest observed joint violation frequencies are $0.50\%$ for NPI and $0.66\%$ for LTT sequence; no joint violations are observed for LTT Holm. These frequencies are below the prescribed $5\%$ failure tolerance. The method of \citet{ndaoud2026ask} makes fewer indecisions, but its joint violation frequencies range from $49.5\%$ to $77.3\%$. NP umbrella controls type I error, while its type II error exceeds the target in every repetition at $d=1.5$ and $d=2$.

The exact crossing probability bounds the event that some violating pair passes one class's check. This event includes pairs outside the search grid and pairs that fail the other class's check. At $d=2$, this bound is $2.52\%$ per class, whereas the selected rule's observed type I and type II violation frequencies are $0.10\%$ and $0.22\%$, respectively. For both NPI and LTT, passing calibration requires finite-sample evidence that the errors are below their $10\%$ targets, favoring rules with a margin below those targets.

\subsection{Nonlinear separation}
\label{subsec:nonlinear-simulation}

We generate concentric rings through $X=R(\cos\Theta,\sin\Theta)$, where $\Theta\sim\operatorname{Unif}(0,2\pi)$ is independent of $R$, and
\[
 R\mid Y=1\sim\operatorname{Unif}(0.5,1.5),
 \qquad
 R\mid Y=2\sim\operatorname{Unif}(0.5+d,1.5+d).
\]
We consider $d\in\{0.6,0.8,1.0,1.2\}$, with sampling uniform in radius. The training and calibration samples each contain 200 observations, with 100 per class. We fit AdaBoost with at most 100 decision stumps \citep{freund1997decision} using the Cartesian coordinates. Population probabilities are obtained by analytic integration over the rectangular regions on which the fitted score is constant. The oracle classifies class-exclusive regions without error and permits randomization within the overlap.

NPI achieves substantially higher decision coverage than both LTT implementations at $d\ge0.8$ (Figure~\ref{fig:np-rings-separation}). At $d=1$, mean indecision is $20.1\%$ for NPI, $77.6\%$ for LTT sequence, and $99.9\%$ for LTT Holm. At $d=1.2$, these percentages decrease to $12.9\%$, $35.8\%$, and $93.2\%$, respectively. At the smallest separation, $d=0.6$, NPI and both LTT implementations make few decisions.

This setting illustrates the value of selecting thresholds without a prescribed testing order. Learning the nonlinear score is more difficult, and training errors can be overly optimistic about which threshold pairs will pass independent calibration. LTT sequence stops at the first failed calibration check, potentially leaving untested pairs that would pass. NPI evaluates the full calibration grid without relying on a training-based ordering, allowing it to identify passing pairs with higher coverage. LTT Holm also searches its full candidate collection, but its multiplicity adjustment makes the calibration checks more demanding.

The coverage advantage is accompanied by joint error control: the largest observed joint violation frequency is $0.54\%$ for NPI, compared with $0.20\%$ for LTT sequence and zero for LTT Holm, all below the $5\%$ failure tolerance. The method of \citet{ndaoud2026ask} makes more decisions, but its joint violation frequencies range from $35.5\%$ to $72.9\%$. Together with Section~\ref{subsec:gaussian-simulation}, these results demonstrate comparable coverage for NPI and LTT sequence in the Gaussian setting, with substantial gains for NPI in this more challenging nonlinear setting.

\section{Real data}
\label{sec:real_data}

We study recidivism prediction using COMPAS data in this section and credit-card default prediction using data from the University of California, Irvine (UCI) Machine Learning Repository in Appendix Section~\ref{app:credit-application}. We compare NPI with NP umbrella, the method of \citet{ndaoud2026ask}, and the two LTT implementations used in Section~\ref{sec:simulations}. We examine how their selective errors and decision coverage change as the common target $\alpha_1=\alpha_2=\alpha$ varies.

For each data set, we randomly partition all observations into training, calibration, and test samples 1,000 times, without balancing the classes. Within each partition, data-dependent preprocessing and score fitting use only the training sample. All methods and targets share this fitted score and the same calibration and test samples. Larger scores favor class 1; the learners and sample sizes are specified with each application. Model and sample-size choices were informed by exploratory analyses of these data sets and held fixed across the reported partitions.

At each target, NPI and both LTT implementations use a joint failure allowance $\delta=0.05$, with NPI assigning $1-\sqrt{0.95}$ to each class. NPI's critical values depend on the calibration class sizes $n_1,n_2$ in each partition. NP umbrella uses a failure tolerance of $0.05$ for its type I error constraint. The test sample is reserved for evaluation.

Within each actual class, we calculate the fraction of incorrect classifications among test observations receiving definitive decisions, using $0/0=0$ when no test decisions are made in that class. All 1,000 partitions enter the error distributions, their quantiles, and the mean test indecision rate. These empirical errors include test-sample variation, so their quantiles do not directly assess the joint population guarantee in \eqref{eq:joint-control}.

\begin{figure}[!t]
    \centering
    \resizebox{\textwidth}{!}{%
        \input{arxiv_figures/np_compas_ltt.tex}%
    }
    \caption{\small COMPAS results over 1,000 partitions, using logistic regression with 500 training, 4,000 calibration, and 1,672 test observations. Violins show held-out selective error minus $\alpha$, with type I above type II; the latter includes a smaller full-range panel. Zero-error bars and positive-error violin widths are scaled by their respective proportions of partitions. Filled and open markers indicate medians and 95th percentiles; arrows mark off-scale values. The table gives mean test indecision percentages, omitting NP umbrella's zero-indecision row.
    }
    \label{fig:compas_alpha}
\end{figure}

\subsection{Recidivism prediction}
\label{subsec:compas-application}

The COMPAS (Correctional Offender Management Profiling for Alternative Sanctions) data were compiled by ProPublica to study recidivism risk assessment in Broward County, Florida \citep{Angetal16-COMPAS1}. Applying ProPublica's screening criteria \citep{larson2016compasanalysis} yields 6,172 individuals. Class 1 comprises 3,363 individuals without recorded recidivism within two years; class 2 comprises 2,809 with recorded recidivism.

Each partition assigns 500 individuals to training, 4,000 to calibration, and 1,672 to testing. We fit an unpenalized logistic regression using age, the logarithm of one plus the prior-offense count, and the supplied COMPAS decile score. The estimated probability of no recorded recidivism provides the shared classification score. We consider $\alpha\in\{0.20,0.25,0.30\}$.

Selective type I error measures the fraction incorrectly classified as recidivists among class 1 individuals receiving a definitive classification. Selective type II error is the corresponding fraction incorrectly classified as nonrecidivists among class 2 individuals.

Figure~\ref{fig:compas_alpha} shows that NPI classifies more test observations than either LTT implementation at each displayed target. At $\alpha=0.25$, mean test indecision is $64.0\%$, $79.0\%$, and $74.2\%$ for NPI, LTT sequence, and LTT Holm, respectively; at $\alpha=0.30$, the corresponding values are $35.5\%$, $42.8\%$, and $40.9\%$. The 95th percentiles of NPI's selective type I and type II test errors are $24.6\%$ and $24.5\%$ at $\alpha=0.25$, and $29.7\%$ and $29.4\%$ at $\alpha=0.30$.

LTT Holm's test-error 95th percentiles also remain below both targets. For LTT sequence, they are $24.7\%$ for both errors at $\alpha=0.25$, and $30.3\%$ and $30.1\%$ at $\alpha=0.30$. LTT sequence returns only indecisions in 342 and 56 partitions at these targets, compared with one and zero for NPI.

The $20\%$ target permits few decisions: mean test indecision is $95.8\%$ for NPI, $98.0\%$ for LTT sequence, and $99.5\%$ for LTT Holm. The 95th percentiles of NPI's two test errors are $19.6\%$ and $20.0\%$. The 570 partitions in which NPI returns only indecisions contribute zeros to both of its error distributions.

In this application and the credit-card analysis in Appendix Section~\ref{app:credit-application}, the method of \citet{ndaoud2026ask} makes more decisions than NPI, while the 95th percentiles of both of its selective test errors exceed the displayed targets. NP umbrella classifies every observation, with type II error quantiles above the targets.

\section{Discussion}
\label{sec:discussion}

The NPI framework combines separate control of selective type I and type II errors with an explicit objective of reducing indecisions. Score quality affects coverage, although the joint finite-sample guarantee holds for any independently trained score.

The oracle rule of Section~\ref{subsec:oracle} and the ideal score-specific benchmark of Section~\ref{subsec:calibration-efficiency} separate the sources of indecision. Class overlap can limit coverage even when the likelihood ratio is known, and an estimated score may further restrict the coverage attainable by thresholding. Theorem~\ref{thm:coverage-oracle} bounds the additional loss due to calibration under local regularity conditions. Low coverage can therefore reflect either limited separation or insufficient calibration data. Improving the score and collecting more calibration observations address different sources of this loss. The dependence on the smaller classwise sample size highlights the importance of calibration data from both classes. How to allocate a fixed data budget between score learning and calibration, particularly when one class is scarce, remains an important question.

The present theory concerns binary classification and assumes independent sampling within each class, with an exchangeable extension given in the appendix. The common calibration event covers many rules that are never selected, so the final rule's error guarantee can be conservative. Appendix Section~\ref{app:general-boundaries} allows other calibration boundaries within the same count-comparison argument. Choosing these boundaries to improve coverage is a further direction.

Another extension concerns the choice of the score and error targets, which our theory requires to be fixed before calibration. Choosing among several fitted scores or target pairs using their calibration results introduces an additional selection problem. Simultaneous calibration across these choices would allow error--coverage comparisons to guide the final rule. Developing such a procedure while limiting additional indecisions is a useful direction for future research.

\bibliographystyle{plainnat}
\bibliography{myrefs_revised}

\clearpage
\appendix
\renewcommand{\theequation}{\thesection.\arabic{equation}}
\setcounter{equation}{0}
\begin{center}
  {\Large\bfseries Online Supplementary Material for\par}
  \medskip
  {\large\papertitle\par}
\end{center}
\bigskip

Section~\ref{app:proofs} gives proofs of the main results. Section~\ref{sec:extensions} presents additional theoretical results, including extensions to infinite exchangeability and other calibration boundaries. Section~\ref{app:calibration-tables} describes the LTT implementations, and Section~\ref{app:credit-application} reports the credit-card default analysis.

\section{Proofs of the main results}
\label{app:proofs}

\subsection{Proof of Theorem~\ref{thm:oracle}}
\label{app:oracle-proof}

\begingroup\color{black}
\begin{proof}
For any feasible classifier, we construct a two-tail rule with the same coverage and no larger selective errors. We then show that the best such rule is attained by optimizing over the probabilities of assigning each class.

\paragraph{Step 1: Place the class assignments in two ordered tails.}
Let
\[
 \vartheta(x)=\Prob(Y=1\mid X=x)
 =\frac{\pi_1f_1(x)}{\pi_1f_1(x)+\pi_2f_2(x)}.
\]
Its value where both densities vanish is immaterial. Throughout the proof, expectations are taken under the marginal distribution of $X$. For a randomized rule, let $p_j(x)$ be the probability of assigning action $j$ at $x$, $j=0,1,2$. Thus $p_j\ge0$ and $p_0+p_1+p_2=1$. Fix any feasible rule and write
\[
 a_j=\E\{p_j(X)\},\qquad j=1,2.
\]
Its coverage is $a_1+a_2\le1$. We preserve these two assignment probabilities while placing class 1 on the largest posterior scores and class 2 on the smallest.

Let $G(z)=\Prob\{\vartheta(X)\le z\}$ and, for $v\in(0,1)$, define $Q(v)=\inf\{z\in[0,1]:G(z)\ge v\}$. Set $Q(0)=0$ and $Q(1)=1$, and choose
\[
 a=Q(a_2),\qquad b=Q(1-a_1).
\]
The inequality $a_2\le1-a_1$ gives $a\le b$. Define the new action probabilities by
\[
 \begin{aligned}
 \widetilde p_2(x;a_2)
 &=\ind\{\vartheta(x)<a\}
   +\gamma_2\ind\{\vartheta(x)=a\},\\
 \widetilde p_1(x;a_1)
 &=\ind\{\vartheta(x)>b\}
   +\gamma_1\ind\{\vartheta(x)=b\},
 \end{aligned}
\]
where
\[
 \gamma_2=\frac{a_2-\Prob\{\vartheta(X)<a\}}
                 {\Prob\{\vartheta(X)=a\}},
 \qquad
 \gamma_1=\frac{a_1-\Prob\{\vartheta(X)>b\}}
                 {\Prob\{\vartheta(X)=b\}}.
\]
Take the corresponding $\gamma_j=0$ when its denominator is zero. The quantile definitions give $\Prob\{\vartheta(X)<a\}\le a_2\le\Prob\{\vartheta(X)\le a\}$ and $\Prob\{\vartheta(X)>b\}\le a_1\le\Prob\{\vartheta(X)\ge b\}$. Hence $\gamma_j\in[0,1]$ and $\E\{\widetilde p_j(X;a_j)\}=a_j$, $j=1,2$.

When $a<b$, the two class assignments have disjoint supports. If $a=b$ and the common boundary has probability $d=\Prob\{\vartheta(X)=a\}>0$, then
\[
 \gamma_1+\gamma_2
 =\frac{a_1+a_2-1+d}{d}\le1.
\]
Thus the assignments are compatible at a shared boundary as well. A boundary of zero probability does not affect the rule. Setting $\widetilde p_0=1-\widetilde p_1-\widetilde p_2$ therefore gives a classifier with coverage $a_1+a_2$.

\paragraph{Step 2: Compare the classwise error probabilities.}
For this comparison, suppress $a_j$ in $\widetilde p_j$. The upper-tail assignment satisfies
\[
 (\vartheta-b)(\widetilde p_1-p_1)\ge0.
\]
Above $b$, the new class 1 assignment probability is one; below $b$, it is zero; at $b$, the product vanishes. Taking expectations and using $\E\{\widetilde p_1(X)\}=\E\{p_1(X)\}$ gives
\[
 \E\{\vartheta(X)\widetilde p_1(X)\}
 \ge\E\{\vartheta(X)p_1(X)\}.
\]
The lower-tail assignment similarly satisfies $(\vartheta-a)(\widetilde p_2-p_2)\le0$, yielding
\[
 \E\{\vartheta(X)\widetilde p_2(X)\}
 \le\E\{\vartheta(X)p_2(X)\}.
\]
Write $P_i g=\E\{g(X)\mid Y=i\}$. By conditioning on $X$,
\[
 P_1g=\frac{\E\{\vartheta(X)g(X)\}}{\pi_1},
 \qquad
 P_2g=\frac{\E\{(1-\vartheta(X))g(X)\}}{\pi_2}.
\]
These identities and the preserved assignment probabilities imply
\[
 \begin{aligned}
 P_1\widetilde p_1&\ge P_1p_1,
 &P_2\widetilde p_1&\le P_2p_1,\\
 P_1\widetilde p_2&\le P_1p_2,
 &P_2\widetilde p_2&\ge P_2p_2.
 \end{aligned}
\]
Within each class, the probability of a correct decision increases or stays the same, while the probability of an incorrect decision decreases or stays the same. Consequently, neither selective error increases.

For the limiting argument below, express the error constraints in terms of these decision probabilities. With $0/0=0$, $R_i(h)\le\alpha_i$, $i=1,2$, is equivalent to
\begin{equation}
 \begin{split}
 (1-\alpha_1)P_1p_2-\alpha_1P_1p_1&\le0,\\
 (1-\alpha_2)P_2p_1-\alpha_2P_2p_2&\le0.
 \end{split}
 \label{eq:linear}
\end{equation}
The four inequalities above show directly that replacing $p_j$ by $\widetilde p_j$ preserves both constraints. Thus every feasible classifier has a feasible two-tail counterpart with the same coverage.

\paragraph{Step 3: Show that an optimal pair of assignment probabilities exists.}
The tail construction in Step 1 is valid for every pair satisfying
\[
 a_1\ge0,\qquad a_2\ge0,\qquad a_1+a_2\le1.
\]
Using assignment probabilities lets us compare rules even when the posterior score equals a threshold with positive probability. Increasing $a_j$ enlarges the corresponding tail; if its threshold stays at a value with positive probability, the boundary assignment probability increases. Hence $\widetilde p_j(x;a_j)$ is nondecreasing in $a_j$ for each $x$. Since its expectation is $a_j$,
\[
 \E\bigl|\widetilde p_j(X;a_j)-\widetilde p_j(X;a_j')\bigr|
 =|a_j-a_j'|.
\]
For each class, $\pi_iP_i|g|\le\E|g(X)|$. Therefore
\[
 \left|P_i\widetilde p_j(\cdot;a_j)
       -P_i\widetilde p_j(\cdot;a_j')\right|
 \le\frac{|a_j-a_j'|}{\pi_i},\qquad i,j=1,2.
\]
Thus small changes in the assignment probabilities produce small changes in all four classwise decision probabilities.

Let $V$ be the supremum of coverage over the feasible pairs. The pair $(0,0)$ is feasible, so $V\in[0,1]$. Choose feasible pairs whose coverages approach $V$. Since both coordinates lie in $[0,1]$, a subsequence converges to a pair $(a_1^\star,a_2^\star)$ with $a_1^\star+a_2^\star\le1$. The bound above ensures convergence of the four classwise decision probabilities. Taking limits in \eqref{eq:linear} shows that the rule indexed by $(a_1^\star,a_2^\star)$ remains feasible. Its coverage is $a_1^\star+a_2^\star=V$. By Step 2, this is also the largest coverage among all feasible classifiers.

\paragraph{Step 4: Express the optimal rule using the likelihood ratio.}
For the optimal pair, let $a^\star=Q(a_2^\star)$ and $b^\star=Q(1-a_1^\star)$. The posterior score satisfies
\[
 \vartheta(x)=\frac{\pi_1L(x)}{\pi_1L(x)+\pi_2},
\]
with posterior value zero when $f_1=0<f_2$ and one when $f_2=0<f_1$; set $L=0$ and $L=+\infty$, respectively, on these two sets. This transformation is strictly increasing. Thus
\[
 \ell^{\mathrm{OR}}=\frac{\pi_2a^\star}{\pi_1(1-a^\star)},
 \qquad
 u^{\mathrm{OR}}=\frac{\pi_2b^\star}{\pi_1(1-b^\star)},
\]
with value $+\infty$ when the corresponding posterior threshold is one, give the same decisions as the optimal two-tail rule. They satisfy $0\le\ell^{\mathrm{OR}}\le u^{\mathrm{OR}}\le+\infty$, with the boundary assignments permitted in \eqref{eq:oracle}. This proves that the minimum in \eqref{eq:oracle-cutoffs} equals the global minimum in \eqref{eq:population}.

If some rule without indecisions satisfies both constraints, then $V=1$ and $a_2^\star=1-a_1^\star$. Hence $a^\star=b^\star$, giving coincident likelihood-ratio thresholds. At a common boundary of positive probability, $\gamma_1+\gamma_2=1$, so no indecisions remain there.
\end{proof}
\endgroup

\subsection{Proof of Lemma~\ref{lem:fixed-pair-calibration}}
\label{app:evidence-interpretation}

\begin{proof}
We first verify that the reference process is a martingale and derive its crossing recursion. We then establish the existence of the exact critical value and compare a fixed candidate rule with this reference process.

\paragraph{Step 1: Verify the reference martingale.}
For $r\in(0,\alpha)$ and nonnegative counts $e,k$, define the binomial likelihood ratio
\begin{equation}
 \mathscr L_{\alpha,r}(e,k)
 =\left(\frac r\alpha\right)^e
  \left(\frac{1-r}{1-\alpha}\right)^k.
 \label{eq:calibration-lr}
\end{equation}
For the independent Bernoulli$(\alpha)$ reference variables, write $W_m=\sum_{j=1}^m\xi_j$ and
\begin{equation}
 \mathscr L_m(r)=\mathscr L_{\alpha,r}(W_m,m-W_m),
 \qquad \mathscr L_0(r)=1.
 \label{eq:reference-likelihood-process}
\end{equation}
Conditional on $\xi_1,\ldots,\xi_{m-1}$, the expected multiplier from the next observation is
\[
 \alpha\frac r\alpha
 +(1-\alpha)\frac{1-r}{1-\alpha}=1.
\]
Thus $\mathscr L_m(r)$ is a nonnegative mean-one martingale. Integrating these nonnegative processes over $r$ gives the same property for
\begin{equation}
 T_m=\frac1\alpha\int_0^\alpha\mathscr L_m(r)\,dr
     =B_\alpha(W_m,m-W_m),\qquad T_0=1.
 \label{eq:reference-mixture-process}
\end{equation}

Fix $c\ge1$, and stop at the first $m\le n$ with $T_m>c$, or at $n$ if there is no crossing. This bounded stopping time $\tau$ satisfies $\E(T_\tau)=1$. On the crossing event, $T_\tau>c$, so nonnegativity gives $\psi_{n,\alpha}(c)\le1/c$. The same argument with a non-strict crossing gives
\[
 \Prob\!\left\{\max_{1\le m\le n}T_m\ge a\right\}\le1/a,
 \qquad a>1.
\]

\paragraph{Step 2: Compute the crossing probability.}
For fixed $m$, replacing a correct decision by an error multiplies the integrand in $B_\alpha(e,m-e)$ by
\[
 \frac{r(1-\alpha)}{\alpha(1-r)}<1,
 \qquad 0<r<\alpha.
\]
Consequently, $B_\alpha(e,m-e)$ is strictly decreasing in $e$, and $T_m>c$ is equivalent to $W_m\le b_{\alpha,c}(m)$. Let $p_m(w)$ be the probability that $W_m=w$ and the process has not crossed by time $m$. Initially $p_0(0)=1$, with all other entries zero. A surviving path reaches $w$ at the next step from $w-1$ with probability $\alpha$, or from $w$ with probability $1-\alpha$. It remains a surviving path exactly when $w>b_{\alpha,c}(m)$. This proves the recursion in \eqref{eq:crossing-recursion}, with entries outside $0\le w\le m$ taken to be zero. Summing the surviving probabilities at time $n$ yields $\psi_{n,\alpha}(c)=1-\sum_{w=0}^n p_n(w)$.

\paragraph{Step 3: Establish the exact critical value.}
The reference maximum takes finitely many values. As $c$ increases, the crossing probability changes only when $c$ reaches one of these values, which the strict check excludes at equality. Above the largest value, the probability is zero. Hence the first admissible critical value in \eqref{eq:exact-cutoff} is attained. To compare it with Ville's critical value, fix $\eta\in(0,1)$ and take $a=1/\eta>1$. Finiteness of the support permits a value $c\in[1,a)$ such that
\[
 \left\{\max_{1\le m\le n}T_m>c\right\}
 =\left\{\max_{1\le m\le n}T_m\ge a\right\}.
\]
The non-strict crossing bound in Step 1 makes this $c$ admissible. Thus $1\le c_{n,\alpha}(\eta)<1/\eta$. In particular, every count configuration satisfying $B_\alpha(e,k)\ge1/\eta$ also satisfies the exact check.

\paragraph{Step 4: Bound the error for a fixed candidate pair.}
Condition on the training information and calibration labels, so the score, prespecified pair, and class counts are fixed. For class $i$, abbreviate $\alpha=\alpha_i$, $\theta=R_i(h_{\ell,u})$, and $q=P_i\{h_{\ell,u}(X)\ne0\}$. Under the null $\theta\ge\alpha>0$, $q>0$ because a rule making no decisions has error zero by convention. The calibration observations independently produce an erroneous decision, a correct decision, or an indecision with probabilities $q\theta$, $q(1-\theta)$, and $1-q$. Therefore
\[
 m_i\sim\operatorname{Bin}(n_i,q),\qquad
 e_i\mid m_i=m\sim\operatorname{Bin}(m,\theta).
\]
Generate Bernoulli variables at probabilities $\alpha$ and $\theta$ using the same independent uniform variables $A_j$: $\ind\{A_j\le\alpha\}\le\ind\{A_j\le\theta\}$ for each $j$. Their sums therefore satisfy the same ordering. Since the statistic decreases with the error count, for $m\ge1$ and $c\ge1$,
\[
 \begin{aligned}
 \Prob\!\left\{B_\alpha(e_i,m_i-e_i)>c\mid m_i=m\right\}
 &\le\Prob\{B_\alpha(W_m,m-W_m)>c\}\\
 &\le\psi_{n_i,\alpha}(c).
 \end{aligned}
\]
When $m=0$, the event is impossible because $B_\alpha(0,0)=1$. Averaging over $m_i$ proves the fixed-pair probability bound.
\end{proof}

\subsection{Proof of Proposition~\ref{prop:largest-region}}
\label{app:largest-region-proof}

\begin{proof}
Fix the fitted score and calibration observations. We first combine the two extreme tails from passing pairs and establish maximality of the resulting decision region. We then compare exact calibration with Ville's critical values.

\paragraph{Step 1: Verify that the selected pair passes calibration.}
Write $k_i=m_i-e_i$ for the number of correct decisions in class $i$. The statistic $B_{\alpha_i}(e_i,k_i)$ decreases when an error is added and increases when a correct decision is added. Thus a passing check remains satisfied if its error count does not increase and its correct count does not decrease.

Because $\mathcal A_{\mathcal G}$ is finite and nonempty, there is a passing pair $(\ell_a,u_a)$ with $\ell_a=\ell_{\max}$ and another $(\ell_b,u_b)$ with $u_b=u_{\min}$. Suppose first that $\ell_{\max}\le u_{\min}$. For class 1, the pair $(\ell_{\max},u_{\min})$ has the same error count as $(\ell_a,u_a)$: both assign scores at most $\ell_{\max}$ to class 2. Since $u_{\min}\le u_a$, its correct count is at least as large. The class-1 check therefore passes. For class 2, compare with $(\ell_b,u_b)$. The error count is unchanged because both upper thresholds equal $u_{\min}$, and the correct count can only increase because $\ell_{\max}\ge\ell_b$. Thus the class-2 check also passes.

If $\ell_{\max}>u_{\min}$, take any $t\in[u_{\min},\ell_{\max}]$. Relative to $(\ell_a,u_a)$, the diagonal pair $(t,t)$ has no more class-1 errors and at least as many class-1 correct decisions, because $t\le\ell_a\le u_a$. Relative to $(\ell_b,u_b)$, it has no more class-2 errors and at least as many class-2 correct decisions, because $\ell_b\le u_b\le t$. Both checks therefore pass. Taking $t=u_{\min}$ gives the selected pair. In both cases the selected thresholds belong to $\mathcal G$, so this pair is an available passing candidate.

\paragraph{Step 2: Show that the selected decision region is the largest.}
When $\ell_{\max}\le u_{\min}$, every passing pair satisfies $\ell\le\ell_{\max}$ and $u\ge u_{\min}$. Consequently,
\[
 (-\infty,\ell]\cup(u,+\infty)
 \subseteq
 (-\infty,\ell_{\max}]\cup(u_{\min},+\infty).
\]
When $\ell_{\max}>u_{\min}$, the selected pair is diagonal and classifies every real score, so the inclusion is immediate. Composing with $s$ proves pointwise inclusion of decision regions in feature space. Integrating their indicators under any distribution proves the asserted coverage maximality.

\paragraph{Step 3: Compare with Ville's critical values.}
Keep the score, calibration observations, targets, grid, and allowances fixed. Lemma~\ref{lem:fixed-pair-calibration} gives $c_i<1/\eta_i$. Every pair satisfying $B_{\alpha_i}(e_i,k_i)\ge1/\eta_i$ for both classes therefore passes the exact checks. By Step 2, its decision region is contained in that of $\widehat h$. An all-indecision rule has an empty decision region, so the inclusion also holds for the fallback rule. Integrating this pointwise inclusion gives no larger indecision probability under any score distribution; averaging it over a fixed test sample gives the corresponding empirical comparison.
\end{proof}

\subsection{Proof of Theorem~\ref{thm:simultaneous}}
\label{app:calibration-proof}

\begin{proof}

We condition throughout on the training data, any independent training randomization, and the calibration labels. First suppose both class counts are positive. The calibration features are then independent, with distribution $P_i$ within class $i$. The score $s$, sample sizes $n_i$, error targets $\alpha_i$, and classwise critical values $c_i$ are therefore fixed. All probabilities and expectations in the proof are conditional on this information. For each class, we construct one Bernoulli reference sequence whose mixture likelihood ratio dominates the calibration statistic of every violating threshold pair at its decision count. The exact crossing probability then bounds the chance that any such pair passes its classwise check. We also show that this bound is attained when the classwise score distribution is continuous, and finally combine the two classwise bounds.

\paragraph{Step 1: Identify the classwise events to be controlled.}
Use the decision and error counts $m_i(\ell,u)$ and $e_i(\ell,u)$ from \eqref{eq:calibration-counts}, and write $k_i(\ell,u)=m_i(\ell,u)-e_i(\ell,u)$ for the number of correct decisions. For any fixed $c\ge1$, define
\[
 \mathcal V_i(c)=\left\{
 \begin{array}{l}
 \text{there exist }-\infty\le\ell\le u\le+\infty\text{ such that}\\[2pt]
 R_i(h_{\ell,u})>\alpha_i
 \quad\text{and}\quad
 B_{\alpha_i}\{e_i(\ell,u),k_i(\ell,u)\}>c
 \end{array}\right\}.
\]
Thus $\mathcal V_i(c)$ occurs if some pair violates the population error limit for class $i$ and passes its check at critical value $c$. We first prove
\begin{equation}
 \Prob\{\mathcal V_i(c)\}\le\psi_{n_i,\alpha_i}(c),\qquad i=1,2.
 \label{eq:classwise-violation-event}
\end{equation}

Step 6 shows that these events can be described using countably many threshold pairs, so their probabilities are well defined.

\paragraph{Step 2: Construct a common Bernoulli reference sequence.}
Fix $i\in\{1,2\}$ and abbreviate $N=n_i$, $\alpha=\alpha_i$, and $F(t)=P_i\{s(X)\le t\}$. We may generate the calibration scores as $S_j=F^{-1}(U_j)$, where $U_1,\ldots,U_N$ are independent $\operatorname{Unif}(0,1)$ variables and $F^{-1}(a)=\inf\{t\in\mathbb R:F(t)\ge a\}$. The quantile identity gives, simultaneously for every threshold,
\begin{equation}
 S_j\le t\quad\Longleftrightarrow\quad U_j\le F(t),
 \qquad j=1,\ldots,N,
 \label{eq:calibration-uniform-representation}
\end{equation}
with $F(-\infty)=0$ and $F(+\infty)=1$. These generated scores have the same joint distribution as the observed scores, including when scores have positive probability at individual values.

To orient errors toward the lower uniform tail, set $V_j=U_j$ for class 1 and $V_j=1-U_j$ for class 2. In either case, the $V_j$ are independent uniform variables. Define
\begin{equation}
 D_j=\min\left\{\frac{V_j}{\alpha},
                  \frac{1-V_j}{1-\alpha}\right\},
 \qquad Z_j=\ind\{V_j\le\alpha\}.
 \label{eq:reference-depth-label}
\end{equation}
The depth $D_j$ is the first value of $d$ at which $V_j$ enters $[0,\alpha d]\cup[1-(1-\alpha)d,1]$. As $d$ increases, these two tails expand from the endpoints. Ordering observations by their entry times gives one reference sequence for all candidate threshold pairs. Indeed, for $0\le d\le1$,
\[
 \Prob(D_j\le d,Z_j=1)=\alpha d,
 \qquad
 \Prob(D_j\le d,Z_j=0)=(1-\alpha)d.
\]
Thus $D_j$ is uniform on $(0,1)$ and independent of $Z_j\sim\operatorname{Bernoulli}(\alpha)$. The pairs $(D_j,Z_j)$ are independent across $j$.

Order the observations by increasing depth, and denote the corresponding labels by $Z_{(1)},\ldots,Z_{(N)}$. The depths are distinct with probability one. Conditional on the entire depth vector, this ordering is a fixed permutation of independent Bernoulli$(\alpha)$ labels. Hence the ordered labels are also independent Bernoulli$(\alpha)$ variables, conditionally on the depths. Write
\begin{equation}
 W_0=0,\qquad W_m=\sum_{j=1}^m Z_{(j)},
 \qquad m=1,\ldots,N.
 \label{eq:reference-error-count}
\end{equation}
The same sequence will be used for all threshold pairs. Its labels $Z_j$ are auxiliary; they need not equal a candidate classifier's error indicators. The construction also leaves the algorithm's treatment of tied scores unchanged.

\paragraph{Step 3: Compare every violating pair with the reference sequence.}
We first compare error counts at a common decision count. On the probability-one event of distinct depths, every pair with $R_i(h_{\ell,u})>\alpha$ satisfies
\begin{equation}
 W_{m_i(\ell,u)}\le e_i(\ell,u).
 \label{eq:simultaneous-count-domination}
\end{equation}
To prove this claim, fix a violating pair and let $w$ and $v$ be its population probabilities of an incorrect and a correct decision within class $i$:
\[
 (w,v)=
 \begin{cases}
  (F(\ell),\,1-F(u)),&i=1,\\
  (1-F(u),\,F(\ell)),&i=2.
 \end{cases}
\]
Since $w/(w+v)>\alpha$ and $w+v\le1$, we have
\begin{equation}
 d:=\frac{v}{1-\alpha}<1,
 \qquad \frac{w}{\alpha}>d,
 \qquad 1-v>\alpha.
 \label{eq:violating-tail-inequalities}
\end{equation}

For class 1, a correct decision has $V_j>1-v$, so that $Z_j=0$ and $D_j=(1-V_j)/(1-\alpha)<d$. An indecision has $w<V_j\le1-v$. Its two depth coordinates satisfy $V_j/\alpha>w/\alpha>d$ and $(1-V_j)/(1-\alpha)\ge d$, giving $D_j\ge d$. For class 2, reflection gives $V_j\ge1-v$ for a correct decision and $w\le V_j<1-v$ for an indecision. Consequently, correct decisions have $Z_j=0$ and $D_j\le d$, whereas indecisions have $D_j>d$. These weak and strict inequalities follow directly from the convention $s\le\ell$ and $s>u$ in \eqref{eq:candidate-rule}; no continuity of $F$ is required.

Thus every correct decision precedes every indecision in the depth ordering, in either class. This ordering is what links the candidate counts to the reference counts. Write $m=m_i(\ell,u)$, $e=e_i(\ell,u)$, and $k=k_i(\ell,u)=m-e$. All $k$ correct decisions lie among the first $m$ depths: otherwise, a correct decision after position $m$ would be preceded by at least $m$ decisions, contradicting the total of $m$. Each correct decision contributes a zero to the reference sequence, so $W_m\le m-k=e$. The same inequality is immediate when $k=0$. Once the uniforms are fixed, this comparison holds for every violating pair at once, including a pair chosen from the calibration data.

For fixed $m$, replacing a correct decision by an error multiplies the integrand defining $B_\alpha(e,m-e)$ by
\[
 \frac{r(1-\alpha)}{\alpha(1-r)}<1,
 \qquad 0<r<\alpha.
\]
Hence $B_\alpha(e,m-e)$ is decreasing in $e$. Combining this monotonicity with \eqref{eq:simultaneous-count-domination} gives
\[
 B_\alpha(e,m-e)\le B_\alpha(W_m,m-W_m)
\]
for every violating pair. The right-hand side is evaluated along one reference sequence; only the decision count varies with the pair.

\paragraph{Step 4: Apply the exact crossing probability.}
The reference counts $W_0,\ldots,W_N$ have the law of partial sums of independent Bernoulli$(\alpha)$ variables. Consequently, Lemma~\ref{lem:fixed-pair-calibration} gives
\begin{equation}
 \Prob\!\left\{
  \max_{1\le m\le N}B_\alpha(W_m,m-W_m)>c
 \right\}=\psi_{N,\alpha}(c).
 \label{eq:reference-crossing}
\end{equation}
If a violating pair passes its classwise check, Step 3 implies
\[
 B_\alpha(W_m,m-W_m)\ge B_\alpha(e,m-e)>c,
 \qquad m=m_i(\ell,u),\quad e=e_i(\ell,u).
\]
Here $m\ge1$, since $B_\alpha(0,0)=1\le c$. Thus, restoring the class index,
\begin{equation}
 \mathcal V_i(c)\subseteq
 \left\{
  \max_{1\le m\le n_i}B_{\alpha_i}(W_m,m-W_m)>c
 \right\}
 \label{eq:violation-crossing-inclusion}
\end{equation}
on one probability-one event. The decision count of a data-selected pair need not be a stopping time: it only selects one of the times already included in the crossing event. Equation~\eqref{eq:reference-crossing} proves \eqref{eq:classwise-violation-event}.

\paragraph{Step 5: Establish equality for a continuous score distribution.}
Suppose $F$ is continuous. We show that every reference crossing can be realized by a strictly violating pair with the same calibration counts. This step concerns all real threshold pairs and a single classwise check.

Suppose $B_\alpha(W_m,m-W_m)>c$ for some $m\ge1$, and let $d=D_{(m)}$. Almost surely $d<1$, and the first $m$ observations in depth order consist of the $W_m$ points with $V_j\le\alpha d$ and the $m-W_m$ points with $V_j\ge1-(1-\alpha)d$. We enlarge both tails without adding a calibration observation, increasing the erroneous tail enough to make the resulting rule's population error exceed $\alpha$. Choose $\epsilon>0$ sufficiently small that $d+2\epsilon<1$ and no $V_j$ lies in $(\alpha d,\alpha d+\epsilon]$. Next choose
\[
 0<\epsilon'<\min\{\epsilon,\epsilon(1-\alpha)/\alpha\}
\]
so that no $V_j$ lies in $[1-(1-\alpha)d-\epsilon',\,1-(1-\alpha)d)$. Such choices exist because the sample is finite. Set
\[
 w=\alpha d+\epsilon,\qquad
 v=(1-\alpha)d+\epsilon'.
\]
Then $w,v>0$, $w+v<1$, and $(1-\alpha)w-\alpha v=(1-\alpha)\epsilon-\alpha\epsilon'>0$.

Continuity of $F$ ensures that it attains every value in $(0,1)$, even if it has flat portions. We can therefore choose finite ordered thresholds with
\[
 (F(\ell),F(u))=
 \begin{cases}
  (w,1-v),&i=1,\\
  (v,1-w),&i=2.
 \end{cases}
\]
The two target values are ordered because $w+v<1$. For class 1, errors have $V_j\le w$ and correct decisions have $V_j>1-v$; for class 2, these comparisons are $V_j<w$ and $V_j\ge1-v$. Our choices of $\epsilon,\epsilon'$ preserve exactly the first $m$ reference observations under either convention. The pair thus makes $W_m$ errors and $m-W_m$ correct decisions, so it passes the classwise check. Its population error is $w/(w+v)>\alpha$. This proves the reverse inclusion in \eqref{eq:violation-crossing-inclusion}, and hence equality in \eqref{eq:classwise-violation-event} when $F_i$ is continuous.

\paragraph{Step 6: Handle all thresholds and combine the classwise bounds.}
Before combining the bounds, we show that the event $\mathcal V_i(c)$ can be described using countably many threshold pairs. Round every finite threshold $t$ up to $t^{(b)}=2^{-b}\lceil2^b t\rceil$, retaining infinite endpoints. The rounding map preserves $\ell\le u$, and its values decrease to the original thresholds as $b\to\infty$. For a finite calibration sample, all classifications eventually remain unchanged, including observations tied at a threshold. Let $w_i=P_i\{h_{\ell,u}(X)=3-i\}$ and $v_i=P_i\{h_{\ell,u}(X)=i\}$. A strict violation is equivalent to $(1-\alpha_i)w_i-\alpha_i v_i>0$. Since $F_i(t^{(b)})\to F_i(t)$, the strict violation persists for all sufficiently fine rounding. Thus $\mathcal V_i(c)$ equals the union of the corresponding events over ordered pairs of thresholds of the form $k/2^b$ ($k\in\mathbb Z$, $b\in\{1,2,\ldots\}$) and infinite endpoints, a countable collection.

For the critical values $c_i=c_{n_i,\alpha_i}(\eta_i)$, the defining crossing probabilities satisfy $\psi_{n_i,\alpha_i}(c_i)\le\eta_i$. Under the conditioning fixed at the start of the proof, $\mathcal V_i(c_i)$ depends only on the class-$i$ calibration scores. The two classwise samples are independent, so these events are independent under the same conditioning. Therefore
\[
 \begin{aligned}
 \Prob\{\mathcal V_1(c_1)^c\cap\mathcal V_2(c_2)^c\}
 &=\prod_{i=1}^2\bigl[1-\Prob\{\mathcal V_i(c_i)\}\bigr]\\
 &\ge\prod_{i=1}^2\{1-\psi_{n_i,\alpha_i}(c_i)\}\\
 &\ge(1-\eta_1)(1-\eta_2)\ge1-\delta.
 \end{aligned}
\]
This product argument concerns the two classwise events, each of which already ranges over all threshold pairs. It does not require independence of the errors of the data-selected classifier. On this intersection, every pair in the passing set $\mathcal A$ satisfies both population error limits. Because this statement covers all thresholds, it applies to the pair selected from the random grid $\mathcal G\times\mathcal G$ in \eqref{eq:selected-rule}. The finite grid search and the extrema defining the selected pair are measurable. If the grid contains no passing pair, the procedure returns $\widehat h\equiv0$, whose two selective errors are zero by convention. We have proved the claimed conditional guarantee for $\widehat h$. When calibration labels are random, average over them; if either class is absent, the all-indecision fallback satisfies both constraints. This gives the joint bound conditional on the training information alone. Averaging over that information proves \eqref{eq:joint-control}.

\begingroup
\paragraph{Label shift.}
To verify the consequence stated in the main text, consider any target class proportions $\pi_i>0$ with $\pi_1+\pi_2=1$, retaining the conditional distributions $P_i$. Write $\Prob_{\boldsymbol\pi}$ for this target distribution. For every rule $h$,
\[
 \Prob_{\boldsymbol\pi}\{h(X)\ne Y\mid Y=i,\ h(X)\ne0\}
 =\frac{\pi_i P_i\{h(X)=3-i\}}
        {\pi_i P_i\{h(X)\ne0\}}
 =R_i(h),
\]
with the convention $0/0=0$. Thus the validity event established above is the same for every $\boldsymbol\pi$. Conditioning on calibration labels removes any dependence of the bound on calibration class proportions, and conditioning on training allows its class proportions to differ as well. Finally, the decision-region inclusion in Proposition~\ref{prop:largest-region} gives $\sum_i\pi_i q_i(\widehat h)\ge\sum_i\pi_i q_i(h)$ for every passing grid rule $h$, proving the coverage claim for every target mixture.
\endgroup
\end{proof}

\subsection{Proof of Theorem~\ref{thm:coverage-oracle}}
\label{app:coverage-proof}

\begin{proof}
We construct a nearly optimal rule, show that it passes calibration, and transfer its coverage to the selected rule. Joint error control then completes the argument. All probabilities are conditional on the training information and calibration labels.
\paragraph{Step 1: Construct a nearly optimal rule with an error margin.}
Local regularity controls the coverage cost of an error margin. For positive decision probabilities,
\[
 \begin{gathered}
 q_i(h_{\ell,u})=F_i(\ell)+1-F_i(u),\quad i=1,2,\\
 R_1(h_{\ell,u})=\frac{F_1(\ell)}{q_1(h_{\ell,u})},\qquad
 R_2(h_{\ell,u})=\frac{1-F_2(u)}{q_2(h_{\ell,u})}.
 \end{gathered}
\]
Smoothness and positive $q_i(h_s^\star)$ make these quantities continuously differentiable near the ideal pair. Write $g_a=h_{\ell_s^\star+a v_\ell,\,u_s^\star+a v_u}$ for the threshold movement in \eqref{eq:local-oracle-regularity}. If at least one constraint is active, choose $b_{\mathrm{dir}}>0$ smaller than the magnitude of every active directional derivative. Differentiability then gives, for all sufficiently small $a\ge0$, $R_i(g_a)\le\alpha_i-b_{\mathrm{dir}}a$ for each active constraint. For inactive constraints, continuity preserves at least half the initial margin locally. Reducing $a_0>0$ if necessary ensures $R_i(g_a)\le\alpha_i-b_{\mathrm{dir}}a$ for both classes whenever $0\le a\le a_0$. Set $h_t=g_{t/b_{\mathrm{dir}}}$, so that the margin is measured directly by $t$. If both constraints are strict, instead take $h_t=h_s^\star$ for all sufficiently small $t\ge0$.

Set $\kappa=\tfrac12\min_i q_i(h_s^\star)>0$. By continuity of the decision probabilities and differentiability of coverage, constants $D_C,t_0>0$ depending only on the population problem satisfy, for $0\le t\le t_0$,
\begin{equation}
 R_i(h_t)\le\alpha_i-t,\qquad q_i(h_t)\ge\kappa,\quad i=1,2,
 \qquad
 0\le C_s^\star-C(h_t)\le D_Ct.
 \label{eq:local-comparison-path}
\end{equation}
The coverage difference is nonnegative because $h_t$ is feasible. Reduce $t_0$ so that $t_0<\min(\alpha_1,\alpha_2)$.

\paragraph{Step 2: Convert an empirical error margin into calibration evidence.}
We first lower-bound the calibration statistic using an empirical error margin. For one class, let $e,k$ be nonnegative integer counts with $m=e+k\ge1$, and put $\widehat\theta=e/m$. Suppose $\widehat\theta\le\alpha-\Delta$ for some $0<\Delta<\alpha$. For the likelihood ratio $\mathscr L_{\alpha,r}$ defined in \eqref{eq:calibration-lr}, differentiation of its logarithm gives $\frac{d}{dr}\{m^{-1}\log\mathscr L_{\alpha,r}(e,k)\} =(\widehat\theta-r)/\{r(1-r)\}$. Since its logarithm vanishes at $r=\alpha$, integration gives, for $r\in[\alpha-\Delta,\alpha-\Delta/2]$ with $\widehat\theta\le r<\alpha$,
\[
 \begin{aligned}
 \frac1m\log\mathscr L_{\alpha,r}(e,k)
 &=
 \int_r^\alpha\frac{z-\widehat\theta}{z(1-z)}\,dz\\
 &\ge 4\int_r^\alpha(z-r)\,dz
 =2(\alpha-r)^2
 \ge\frac{\Delta^2}{2}.
 \end{aligned}
\]
The first inequality uses $\widehat\theta\le r$ and $z(1-z)\le1/4$. Integrating over an interval of length $\Delta/2$ in \eqref{eq:mixture-evidence} gives
\begin{equation}
 B_\alpha(e,k)
 \ge\frac{\Delta}{2\alpha}
 \exp\left(\frac{m\Delta^2}{2}\right).
 \label{eq:coverage-evidence-bound}
\end{equation}

\paragraph{Step 3: Choose the margin and show that both checks pass.}
Choose the margin to make the evidence bound exceed both critical values. Define
\[
 \zeta=\log\frac{n_{\min}c_{\max}}{\beta},\qquad
 a_{\mathrm{cal}}=\sqrt{\frac{128}{\kappa}},\qquad
 t=a_{\mathrm{cal}}\sqrt{\frac{\zeta}{n_{\min}}}.
\]
With $c_0=(t_0/a_{\mathrm{cal}})^2$, the sample-size condition ensures $t\le t_0$. Step 1 therefore applies to $h_t$, which is independent of the calibration observations. Since $n_{\min}\ge2$ and $c_{\max}\ge1$, we have $\zeta\ge\log2$ and $t\ge n_{\min}^{-1/2}$. Using $\alpha_i<1$ and $c_i\le c_{\max}$, we obtain
\[
 \log\frac{4\alpha_i c_i}{t}
 \le \log4+\tfrac12\log n_{\min}+\log c_{\max}
 \le 3\zeta.
\]
Here $\log4\le2\log n_{\min}$ because $n_{\min}\ge2$. Also, $\log8\le3\zeta$ and $\log(1/\beta)\le\zeta$. Together with the definition of $t$, these inequalities give
\begin{equation}
 \log\frac{4\alpha_i c_i}{t}\le3\zeta,\qquad
 \log\frac8\beta\le4\zeta,\qquad
 n_i\kappa t^2\ge128\zeta,\quad i=1,2.
 \label{eq:coverage-calibration-size}
\end{equation}

We next show that the empirical errors retain half the population margin. For the thresholds $(\ell_t,u_t)$ of $h_t$, write $\widetilde m_i=m_i(\ell_t,u_t)$ and $\widetilde e_i=e_i(\ell_t,u_t)$, using \eqref{eq:calibration-counts}. Because $h_t$ is fixed, its decision indicators are independent Bernoulli variables, and errors conditional on a decision have probability $R_i(h_t)$. Thus
\[
 \widetilde m_i\sim\operatorname{Bin}\{n_i,q_i(h_t)\},
 \qquad
 \widetilde e_i\mid \widetilde m_i=m\sim\operatorname{Bin}\{m,R_i(h_t)\}.
\]
Using $q_i(h_t)\ge\kappa$ and $R_i(h_t)\le\alpha_i-t$, consider the event
\[
 \mathcal E_{i,t}=
 \left\{
 \widetilde m_i\ge\frac{n_i\kappa}{2},\quad
 \widetilde e_i\le \widetilde m_i\left(\alpha_i-\frac{t}{2}\right)
 \right\}.
\]
A binomial Chernoff bound controls the decision count: $\Prob\{\widetilde m_i<n_i\kappa/2\}\le\exp(-n_i\kappa/8)$. Conditional on $\widetilde m_i=m\ge n_i\kappa/2$, Hoeffding's inequality \citep{hoeffding1963probability} bounds the probability that $\widetilde e_i/m$ exceeds $\alpha_i-t/2$ by $\exp(-mt^2/2)$. Consequently,
\begin{equation}
 \Prob\{\mathcal E_{i,t}^{c}\}
 \le
 \exp\left(-\frac{n_i\kappa}{8}\right)
 +
 \exp\left(-\frac{n_i\kappa t^2}{4}\right).
 \label{eq:coverage-certification-failure}
\end{equation}
On $\mathcal E_{i,t}$, the evidence bound \eqref{eq:coverage-evidence-bound} with $\Delta=t/2$ gives
\[
 B_{\alpha_i}(\widetilde e_i,\widetilde m_i-\widetilde e_i)
 \ge
 \frac{t}{4\alpha_i}
 \exp\left(\frac{n_i\kappa t^2}{16}\right)
 >c_i.
\]
Indeed, the exponent is at least $8\zeta$, whereas $\log\{4\alpha_i c_i/t\}\le3\zeta$ by \eqref{eq:coverage-calibration-size}. For the failure probability, $t<1$ and the same sample-size bound give
\[
 \Prob\{\mathcal E_{i,t}^{c}\}
 \le e^{-16\zeta}+e^{-32\zeta}
 \le 2e^{-4\zeta}\le\beta/4.
\]
The last inequality uses $\log(8/\beta)\le4\zeta$. A union bound shows that $h_t$ passes both calibration checks with probability at least $1-\beta/2$.

\begingroup
\paragraph{Step 4: Transfer coverage through the calibration grid.}
The comparison rule need not have thresholds on the search grid. We first bound the coverage lost when replacing it by a grid rule with the same calibration decisions. Let $\widehat F_i(z)=n_i^{-1}\sum_{j=1}^{n_i} \ind\{s(X_{ij})\le z\}$ and set $\varepsilon_i=\{\log(8/\beta)/(2n_i)\}^{1/2}$. The Dvoretzky--Kiefer--Wolfowitz inequality \citep{massart1990tight} and a union bound give, with probability at least $1-\beta/2$,
\[
 \sup_z|\widehat F_i(z)-F_i(z)|\le\varepsilon_i,\qquad i=1,2.
\]
For each threshold $z$, write $z^{\mathcal G}=\max\{g\in\mathcal G:g\le z\}$. This replacement preserves comparisons with every calibration score and the ordering of the two thresholds, including ties. In particular, $\widehat F_i(z^{\mathcal G})=\widehat F_i(z)$. On the event above, therefore,
\[
 |F_i(z^{\mathcal G})-F_i(z)|\le2\varepsilon_i.
\]
Let $h_t^{\mathcal G}=h_{\ell_t^{\mathcal G},u_t^{\mathcal G}}$. It has the same classwise error and correct-decision counts as $h_t$, so it passes both checks whenever $h_t$ does. Using $q_i(h_{\ell,u})=F_i(\ell)+1-F_i(u)$ gives
\begin{equation}
 |q_i(h_t^{\mathcal G})-q_i(h_t)|\le4\varepsilon_i,
 \qquad i=1,2.
 \label{eq:uniform-coverage-bound}
\end{equation}
On the intersection of this event and the event in Step 3 that $h_t$ passes both checks, Proposition~\ref{prop:largest-region} gives
\[
 q_i(\widehat h)\ge q_i(h_t^{\mathcal G})
 \ge q_i(h_t)-4\varepsilon_i,\qquad i=1,2.
\]
This intersection has probability at least $1-\beta$. Multiplying by the target-population proportions and summing yields $C(\widehat h)\ge C(h_t)-4\sum_i\pi_i\varepsilon_i$. Together with \eqref{eq:local-comparison-path}, this implies
\[
 C_s^\star-C(\widehat h)
 \le D_Ct+4\sum_i\pi_i\varepsilon_i
 \le(D_Ca_{\mathrm{cal}}+4\sqrt2)
       \sqrt{\frac{\zeta}{n_{\min}}}.
\]
The last inequality uses $n_i\ge n_{\min}$, $\pi_1+\pi_2=1$, and $\log(8/\beta)\le4\zeta$. Thus the coverage bound holds with probability at least $1-\beta$, with $K=D_Ca_{\mathrm{cal}}+4\sqrt2$.
\endgroup

\paragraph{Step 5: Combine the coverage bound with joint error control.}
Theorem~\ref{thm:simultaneous} guarantees both population error constraints for $\widehat h$ with probability at least $1-\delta$. On this event, $\widehat h$ belongs to the family in \eqref{eq:score-coverage-oracle}, including when it is the all-indecision rule, so $C(\widehat h)\le C_s^\star$. Intersecting the error-control and coverage events proves \eqref{eq:coverage-oracle-bound} with probability at least $1-\delta-\beta$.
\end{proof}

\section{Additional results}
\label{sec:extensions}

\subsection{Infinite exchangeability}
\label{app:exchangeability}

An infinite sequence of labeled observations $(X_j,Y_j)_{j\ge1}$ is \emph{exchangeable} if reordering any finite initial segment leaves its joint distribution unchanged. That is, for every $m\ge1$ and every permutation $(j_1,\ldots,j_m)$ of $(1,\ldots,m)$,
\[
 \bigl((X_1,Y_1),\ldots,(X_m,Y_m)\bigr)
 \stackrel{d}{=}
 \bigl((X_{j_1},Y_{j_1}),\ldots,(X_{j_m},Y_{j_m})\bigr).
\]
The requirement applies to a whole infinite sequence, even though the observed sample is finite.

By de Finetti's theorem, on a standard Borel sample space, such a sequence is independent and identically distributed conditional on a random population distribution $Q$ \citep{foygelbarber2023definetti}. The observations may therefore be dependent through the shared $Q$. Ordinary i.i.d. sampling is recovered when $Q$ is nonrandom.

\begin{corollary}
\label{cor:infinite-exchangeability}
Suppose $(X_j,Y_j)_{j\ge1}$ is infinitely exchangeable on the standard Borel space $\mathcal X\times\{1,2\}$. Let $Q$ be its random population distribution, with $Q(Y=i)>0$ almost surely, and write $Q_i=Q(X\in\cdot\mid Y=i)$. Fit $s$ using a fixed initial training block and, if needed, randomization independent of the sequence. For calibration, retain the first $n_i$ subsequent observations of class $i$, where $n_1,n_2\ge1$ are fixed in advance. Define
\[
 R_i(h;Q_i)=
 \frac{Q_i\{h(X)=3-i\}}{Q_i\{h(X)\ne0\}},
 \qquad i=1,2,
\]
with $0/0=0$. Fix targets $\alpha_i\in(0,1)$ and allowances $\eta_i\in(0,1)$ such that $(1-\eta_1)(1-\eta_2)\ge1-\delta$ before calibration. Apply the NPI calibration and selection steps. Then
\begin{equation}
 \Prob\{R_1(\widehat h;Q_1)\le\alpha_1,\;
        R_2(\widehat h;Q_2)\le\alpha_2\}\ge1-\delta.
 \label{eq:exchangeable-control}
\end{equation}
The same guarantee holds conditional on the training information.
\end{corollary}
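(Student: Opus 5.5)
The plan is to reduce Corollary~\ref{cor:infinite-exchangeability} to Theorem~\ref{thm:simultaneous} by conditioning on the directing measure $Q$ from de Finetti's theorem. Given $Q$, the sequence $(X_j,Y_j)_{j\ge1}$ is i.i.d.\ with law $Q$. The fixed initial training block is independent of all later observations, and so is any external training randomization. Consequently, conditional on $(Q,\text{training information})$, the score $s$ is fixed and the post-training observations are i.i.d.\ from $Q$. The steps are ordered so that the only nontrivial issue, the random calibration sampling scheme, is handled before invoking the theorem.

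\emph{Step 1: classwise calibration samples.} Conditionally on $Q$ and the training information, consider the subsequence of post-training observations with $Y=i$. Since $Q(Y=i)>0$ almost surely, each class appears infinitely often, so the first $n_i$ class-$i$ observations exist almost surely. I would show that the first $n_1$ class-1 features and the first $n_2$ class-2 features are mutually independent. Within class $i$ they should be i.i.d.\ from $Q_i$.

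To prove this, generate the i.i.d.\ sequence as $Y_j\sim Q(Y\in\cdot)$ independently and then $X_j\mid Y_j=i\sim Q_i$, all independent. The positions of the first $n_i$ class-$i$ labels are stopping times of the label sequence alone. Conditional on the whole label sequence, the features at those positions are independent draws from $Q_i$. Since this conditional law does not depend on the labels, it also holds unconditionally given $Q$ and training. This is exactly the sampling scheme of Section~\ref{subsec:method-scores}, with $P_i$ replaced by $Q_i$ and fixed class counts $n_i\ge1$.

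\emph{Step 2: apply the theorem.} Theorem~\ref{thm:simultaneous} applies conditionally on $(Q,\text{training})$, because $s$, $\alpha_i$, $\eta_i$ and $n_i$ are all fixed before calibration. It gives
\[
 \Prob\{R_1(\widehat h;Q_1)\le\alpha_1,\ R_2(\widehat h;Q_2)\le\alpha_2\mid Q,\text{training}\}\ge(1-\eta_1)(1-\eta_2)\ge1-\delta.
\]
Here the population errors in the theorem are precisely $R_i(\cdot;Q_i)$, since they are computed under the conditional class distributions $Q_i$. Averaging over $Q$ gives the statement conditional on the training information, and averaging further gives \eqref{eq:exchangeable-control}.

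\emph{Main obstacle: measurability.} The event $\{R_i(\widehat h;Q_i)\le\alpha_i\}$ must be a measurable function of $(Q,\text{data})$ so that the conditioning and averaging are legitimate. This is where the standard Borel assumption is used: it provides a regular version of $Q$ and of the conditional laws $Q_i$ as a measurable kernel. The selected thresholds are measurable functions of the finite calibration sample. Evaluating $Q_i$ on sets of the form $\{s\le t\}$ and $\{s>t\}$ is jointly measurable in the threshold $t$ and in $Q$, by right-continuity in $t$. A secondary point is that "the training information" is not independent of $Q$. The bound nonetheless holds given both $Q$ and training, so it also holds conditionally on training alone by the tower property.
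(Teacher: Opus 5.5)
Your proposal is correct and follows essentially the same route as the paper. Both arguments condition on the de Finetti measure $Q$ and the training information, then condition further on the post-training labels to see that the retained features are independent draws from $Q_1$ and $Q_2$ whose law does not depend on those labels, apply Theorem~\ref{thm:simultaneous} with $P_i$ replaced by $Q_i$ (whose critical values depend only on $n_i,\alpha_i,\eta_i$, not on $Q$), and average over $Q$ by the tower property, with your measurability remarks adding care that the paper leaves implicit.
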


\begin{proof}
The key is to condition on the random population distribution $Q$. We verify that the retained calibration sample then has the classwise independent sampling law required by Theorem~\ref{thm:simultaneous}, apply that theorem, and average over $Q$.

\paragraph{Step 1: Identify the conditional calibration law.}
Let $\mathcal F_{\mathrm{train}}$ denote the information in the fixed initial training block and independent training seed. Conditional on $Q$, de Finetti's theorem makes the labeled observations independent and identically distributed from $Q$. Conditioning further on $\mathcal F_{\mathrm{train}}$ fixes the fitted score and leaves the post-training sequence independent and identically distributed from $Q$.

To obtain the conditional law of the retained calibration features, condition additionally on all post-training labels. The retained indices are now fixed, and their features are mutually independent; a feature whose label is $i$ has law $Q_i$. Both class quotas are reached almost surely because $Q(Y=i)>0$. The retained features are therefore $n_1$ independent draws from $Q_1$ and $n_2$ independent draws from $Q_2$, with the two groups independent. This law does not depend on the post-training label sequence, so it remains the conditional law given $(Q,\mathcal F_{\mathrm{train}})$ after those labels are integrated out. Thus retaining observations until both class quotas are filled does not change this conditional feature law.

\paragraph{Step 2: Apply the joint error bound conditionally.}
The assumptions of Theorem~\ref{thm:simultaneous} thus hold conditionally, with $P_i$ replaced by $Q_i$. The critical values $c_i=c_{n_i,\alpha_i}(\eta_i)$ and their reference crossing probabilities depend only on the fixed class sample sizes, error targets, and allowances; they do not depend on $Q$. In particular, almost surely,
\[
 \begin{aligned}
 \Prob\!\left\{
  \exists(\ell,u)\in\mathcal A:
  R_i(h_{\ell,u};Q_i)>\alpha_i\text{ for some }i
  \mid Q,\mathcal F_{\mathrm{train}}
 \right\}
 &\le1-\prod_{i=1}^2\{1-\psi_{n_i,\alpha_i}(c_i)\}\\
 &\le1-(1-\eta_1)(1-\eta_2)\le\delta.
 \end{aligned}
\]
On the complementary event, every passing pair satisfies both error requirements, including the selected pair. If no pair passes, $\widehat h\equiv0$ has both selective errors zero.

\paragraph{Step 3: Average over the random population distribution.}
Writing $\mathcal V_{\mathrm{ex}}=\{R_i(\widehat h;Q_i)>\alpha_i\text{ for some }i\}$, we obtain
\[
 \Prob(\mathcal V_{\mathrm{ex}}\mid\mathcal F_{\mathrm{train}})
 =\E\!\left\{
    \Prob(\mathcal V_{\mathrm{ex}}\mid Q,\mathcal F_{\mathrm{train}})
    \mid\mathcal F_{\mathrm{train}}
   \right\}
 \le\delta
 \qquad\text{almost surely}.
\]
Averaging over the training information proves \eqref{eq:exchangeable-control}.
\end{proof}

The selective errors in \eqref{eq:exchangeable-control} concern the population $Q$ shared by calibration and future observations, rather than the marginal population obtained by averaging over $Q$. The infinite-sequence assumption is needed for the conditional independence used in the proof; exchangeability of a finite sample alone does not suffice.

\subsection{Calibration sample requirements}
\label{subsec:calibration-requirements}
\label{app:calibration-properties}

Even a score that separates the two classes needs enough calibration observations to pass the error checks. For a fixed number of decisions $m$, the statistic $B_\alpha$ is largest when none of these decisions is incorrect, in which case
\begin{equation}
 B_\alpha(0,m)
 =\frac{(1-\alpha)^{-m}-(1-\alpha)}{\alpha(m+1)}.
 \label{eq:zero-error-evidence}
\end{equation}
This most favorable case determines the minimum number of decisions required by each classwise check.

\begin{proposition}
\label{prop:minimum-calibration}
For $n\ge1$ and $\alpha,\eta\in(0,1)$, define
\[
 m_0(n,\alpha,\eta)
 =\min\left\{
 m\in\{0,\ldots,n\}:B_\alpha(0,m)>c_{n,\alpha}(\eta)
 \right\},
\]
with $m_0=+\infty$ when the set is empty. A classwise check can pass only if at least $m_0(n_i,\alpha_i,\eta_i)$ calibration observations from class $i$ receive decisions. The check admits a passing count configuration if and only if
\[
 (1-\alpha_i)^{n_i}\le\eta_i,
 \quad\text{equivalently}\quad
 n_i\ge\left\lceil
 \frac{\log\eta_i}{\log(1-\alpha_i)}\right\rceil.
\]
If this condition fails for either class, NPI returns indecisions for all observations. If it holds for both classes and every class 1 calibration score exceeds every class 2 calibration score, a pair passes both checks and classifies every calibration observation correctly.
\end{proposition}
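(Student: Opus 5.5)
The proof rests on one fact from the proof of Lemma~\ref{lem:fixed-pair-calibration}: $B_\alpha(e,m-e)$ is strictly decreasing in $e$ for fixed $m$. Hence, among count configurations with $m$ decisions, the zero-error configuration $(0,m)$ gives the largest statistic. From \eqref{eq:mixture-evidence}, the integrand $(1-r)^m/(1-\alpha)^m$ is decreasing in $r$ with value $1$ at $r=\alpha$, so $B_\alpha(0,m)\ge1$ with equality only at $m=0$. Differentiating in $m$ under the integral shows $B_\alpha(0,m)$ is strictly increasing in $m$. The integral itself is elementary and gives \eqref{eq:zero-error-evidence}.

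\textbf{Necessity of $m_0$.} If a class-$i$ check passes with counts $(e,k)$ and $m=e+k$, then $B_\alpha(0,m)\ge B_\alpha(e,k)>c_i$. By the definition of $m_0$ this forces $m\ge m_0$. Monotonicity in $m$ is not even needed here.

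\textbf{The iff condition.} A passing configuration exists if and only if $B_\alpha(0,n)>c_{n,\alpha}(\eta)$. This uses monotonicity in $m$ together with the fact that $m\le n$. I would relate this to the crossing probability through the reference walk. The event $\{W_n=0\}$ has probability $(1-\alpha)^n$, and on it the path attains $B_\alpha(0,n)$, the largest value any path can reach.

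Suppose $B_\alpha(0,n)>c$ for every admissible $c$ in \eqref{eq:exact-cutoff}. Then in particular $c=c_{n,\alpha}(\eta)$ is crossed with probability at least $(1-\alpha)^n$. So admissibility gives $(1-\alpha)^n\le\psi\le\eta$.

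Conversely, suppose $(1-\alpha)^n\le\eta$. Take $c$ just below $B_\alpha(0,n)$, above every other value of $\max_m T_m$ that is smaller than $B_\alpha(0,n)$. This works because the support is finite. The crossing event is then the event that the maximum equals $B_\alpha(0,n)$. I must check that this maximum is achieved only by the all-zero path. A path with $W_n\ge1$ has $T_m=B(W_m,m-W_m)\le B(0,m)<B(0,n)$ whenever $W_m=0$ and $m<n$, and $T_m\le B(1,m-1)<B(0,m)\le B(0,n)$ whenever $W_m\ge1$. So this $c$ is admissible, which gives $c_{n,\alpha}(\eta)<B_\alpha(0,n)$.

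For the other direction, if $(1-\alpha)^n>\eta$, every admissible $c$ must satisfy $c\ge B_\alpha(0,n)$. Otherwise the all-zero path, with probability $(1-\alpha)^n>\eta$, crosses. Then no configuration passes. The $\lceil\cdot\rceil$ form follows by taking logs, since $\log(1-\alpha)<0$.

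\textbf{Consequences for NPI.} If one class cannot pass, then no pair in $\mathcal A_{\mathcal G}$ exists, and Algorithm~\ref{alg:calibration} returns $\widehat h\equiv0$.

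For the separated case, take $\ell$ to be the largest class-2 calibration score and $u=\ell$. This pair lies in $\mathcal G$. Every class-2 score is $\le\ell$, so it is assigned class 2. Every class-1 score is $>\ell$, so it is assigned class 1. Thus $e_i=0$ and $m_i=n_i$, and the check passes by the iff just shown.

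\textbf{Main obstacle.} The main obstacle is the exact-critical-value step: proving that the all-zero path uniquely attains the global maximum value. This requires strict monotonicity of $B_\alpha(0,m)$ in $m$ and the strict decrease in $e$. I would verify both directly from the integral representation.
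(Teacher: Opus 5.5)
Your proposal is correct and follows the paper's proof essentially step for step. The shared ingredients are: strict monotonicity of $B_\alpha$ in the error and decision counts; the fact that only the all-zero reference path attains $B_\alpha(0,n)$, so a critical value just below it has crossing probability exactly $(1-\alpha)^n$; and the common threshold at the largest class-2 calibration score in the separated case.

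There is one wording slip. ``$B_\alpha(0,n)>c$ for every admissible $c$'' is vacuous, since admissible values exist above $B_\alpha(0,n)$. The hypothesis you actually use is $B_\alpha(0,n)>c_{n,\alpha}(\eta)$, and your contrapositive paragraph already proves that direction correctly. You should also state explicitly that the chosen $c$ can be taken $\ge1$, which is possible because $B_\alpha(0,n)>1$.
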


\begin{proof}
For fixed $m$, replacing a correct decision by an error multiplies the integrand defining $B_\alpha(e,m-e)$ by $r(1-\alpha)/\{\alpha(1-r)\}<1$. Thus $B_\alpha(e,m-e)\le B_\alpha(0,m)$. Setting $e=0$ and integrating gives \eqref{eq:zero-error-evidence}. Adding a correct decision multiplies the integrand by $(1-r)/(1-\alpha)>1$, so $B_\alpha(0,m)$ strictly increases with $m$. No count with $m<m_0(n,\alpha,\eta)$ can therefore pass the exact check.

To determine when any count can pass, consider the reference maximum $\max_{1\le m\le n}T_m$. Its largest possible value is $B_\alpha(0,n)$, attained exactly when all $n$ reference Bernoulli variables are zero. Indeed, every state with an error has a smaller statistic than the zero-error state at the same time, and every earlier zero-error state has a smaller statistic than $B_\alpha(0,n)$. Hence the probability of this largest value is $(1-\alpha)^n$.

If any count can pass, then $B_\alpha(0,n)>c_{n,\alpha}(\eta)$. The all-zero path must therefore be included in the crossing event, giving $(1-\alpha)^n\le\psi_{n,\alpha}\{c_{n,\alpha}(\eta)\}\le\eta$. Conversely, suppose $(1-\alpha)^n\le\eta$. The reference maximum takes finitely many values and $B_\alpha(0,n)>1$. We may choose $c\ge1$ below this largest value and at least as large as every other possible value. Then $\psi_{n,\alpha}(c)=(1-\alpha)^n\le\eta$, so $c_{n,\alpha}(\eta)\le c<B_\alpha(0,n)$. The count configuration with $n$ decisions and no errors therefore passes. This proves the equivalence and its logarithmic form.

Finally, when every class 1 calibration score exceeds every class 2 calibration score, set both thresholds equal to the largest class 2 score. This grid rule makes $n_i$ correct decisions and no errors in each class. The equivalence just proved shows that it passes both checks under the stated sample-size conditions.
\end{proof}

The sample-size condition ensures that a classwise check can pass in the most favorable configuration. The actual minimum decision count $m_0(n,\alpha,\eta)$ depends on the horizon $n$, and passing for observed data also depends on the score separation and error counts.

\subsection{The role of local regularity}
\label{app:local-regularity-example}

The example in Remark~\ref{rem:local-oracle-regularity} shows that smooth score distributions and positive ideal coverage $C_s^\star$ alone do not ensure the coverage bound in Theorem~\ref{thm:coverage-oracle}.

\begin{proposition}
\label{prop:local-regularity-example}
Let $P_1=P_2=\operatorname{Unif}(0,1)$, take $s(x)=x$, and set $\alpha_1=\alpha_2=1/2$. Then $C_s^\star=1$, but the rule returned by Algorithm~\ref{alg:calibration} satisfies
\[
 \Prob\{C_s^\star-C(\widehat h)=1\}\ge1-\delta
 \qquad\text{for every finite }n_1,n_2\ge1.
\]
\end{proposition}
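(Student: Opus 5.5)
The plan has two parts: a population calculation showing $C_s^\star=1$, and a reduction of the probability statement to the joint error guarantee of Theorem~\ref{thm:simultaneous}.

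\textbf{Ideal coverage.} Since $F_1(z)=F_2(z)=z$ on $[0,1]$, a pair $(\ell,u)$ with $0\le\ell\le u\le1$ has
\[
q_1=q_2=\ell+1-u,\qquad R_1=\frac{\ell}{\ell+1-u},\qquad R_2=\frac{1-u}{\ell+1-u}.
\]
Thresholds outside $[0,1]$ reduce to these by clipping. The diagonal pair $\ell=u=1/2$ classifies everything with $R_1=R_2=1/2$, so it is feasible with coverage one. Hence $C_s^\star=1$.

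\textbf{No rule with positive coverage is feasible.} For any pair with $q=\ell+1-u>0$, we have $R_1+R_2=1$. Both constraints $R_i\le1/2$ therefore force $R_1=R_2=1/2$ exactly. In particular, neither constraint can hold strictly, which is the structural failure noted in Remark~\ref{rem:local-oracle-regularity}.

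\textbf{Reduction to Theorem~\ref{thm:simultaneous}.} The aim is to show that, with probability at least $1-\delta$, no grid pair with positive coverage passes calibration. Then $\mathcal A_{\mathcal G}$ contains only pairs with $q=0$. Such pairs exist only at $\ell=-\infty$ (or any $\ell<0$) and $u\ge1$; since the calibration scores lie in $(0,1)$, these pairs make no calibration decisions and cannot pass, because $B(0,0)=1\le c_i$. So either $\mathcal A_{\mathcal G}$ is empty, in which case $\widehat h\equiv0$, or $\widehat h$ is a passing grid rule with $C(\widehat h)=0$. In both cases $C_s^\star-C(\widehat h)=1$.

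A pair with $q>0$ that passes must avoid a strict violation in both classes, so on the event of Theorem~\ref{thm:simultaneous} it would need $R_1=R_2=1/2$ exactly. That is not yet a contradiction, because Theorem~\ref{thm:simultaneous} controls only strict violations. I expect this boundary case to be the main obstacle, and I see two ways to handle it.

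\emph{Perturbation.} Because $F_i$ is continuous, the proof of Theorem~\ref{thm:simultaneous}, Step 5, lets me enlarge the error tail slightly without changing any calibration counts. A passing pair with $R_i=\alpha_i$ then yields a strictly violating pair with the same counts, so it already lies in the crossing event $\mathcal V_i(c_i)$. Thus the event "some pair with $R_i\ge\alpha_i$ passes class $i$'s check" is still contained in the reference crossing event of \eqref{eq:violation-crossing-inclusion}.

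\emph{Lemma~\ref{lem:fixed-pair-calibration} argument.} Alternatively, I can rerun Step 3 of that proof with the non-strict inequality $w/(w+v)\ge\alpha$. This gives $d\le1$ and $w/\alpha\ge d$. Ties between correct decisions and indecisions in the depth ordering then occur only on a null event, because the $V_j$ are continuous. So the count domination $W_m\le e$ still holds almost surely.

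\textbf{Conclusion.} Either route shows that, with probability at least $(1-\eta_1)(1-\eta_2)\ge1-\delta$, no pair with $R_i\ge\alpha_i$ passes class $i$'s check for either $i$. Since every pair with positive coverage has $R_1=R_2=1/2=\alpha_i$, none passes. This holds for every finite $n_1,n_2\ge1$, which completes the proof.
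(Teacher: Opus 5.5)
Your proof is correct, but you resolve the boundary case $R_1=R_2=1/2$ differently from the paper.

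\textbf{The paper's route.} The paper leaves Theorem~\ref{thm:simultaneous} untouched. It shows instead that the boundary case almost surely never arises on the calibration grid. A finite grid pair $0<\ell\le u<1$ meets both constraints only when $\ell+u=1$. That requires a calibration score equal to $1/2$ or two calibration scores summing to one, which is a null event. Grid pairs with an infinite endpoint and positive coverage have error one in some class. Hence every positive-coverage grid rule strictly violates a target, and the theorem applies verbatim.

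\textbf{Your route.} You extend the simultaneous guarantee to non-strict violations $R_i\ge\alpha_i$ over all real threshold pairs. Your route (b) in fact works for any score distribution. The only exceptional events are two equal depths, or some $V_j=\alpha$ (which could give a class-2 correct decision the label $Z_j=1$ when $1-v=\alpha$). Both are null and do not depend on the pair. You should state this pair-independence explicitly, since it is what justifies simultaneity over uncountably many pairs.

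\textbf{What each buys.} The paper's argument is shorter and uses the theorem as a black box, but it relies on the grid and on the specific geometry $\ell+u=1$. Yours gives a reusable, slightly stronger statement: no real threshold pair with positive coverage passes. The cost is reopening the coupling argument.

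\textbf{Two small corrections.} First, your perturbation route needs $F_i$ strictly increasing at the relevant thresholds, not merely continuous. This holds here because boundary pairs have $0<\ell\le u<1$. Second, your conclusion says every positive-coverage pair has $R_1=R_2=1/2$, which is not true. What you actually use is $\max_i R_i\ge1/2$, which follows from $R_1+R_2=1$.
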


\begin{proof}
We first identify the population optimum and then show why the calibration grid almost surely contains no feasible rule with positive coverage.

The common threshold $1/2$ classifies every observation and meets both error limits, so $C_s^\star=1$. Both score distribution functions are continuously differentiable near this ideal pair, and $q_i(h_s^\star)=1$. However, every rule with positive coverage satisfies $R_1(h)+R_2(h)=1$. No movement of the thresholds can therefore reduce both active error rates, and no such rule satisfies both constraints strictly.

We now characterize the feasible pairs and compare them with the calibration grid. For a finite calibration-grid pair $0<\ell\le u<1$,
\[
 R_1(h_{\ell,u})=\frac{\ell}{\ell+1-u},
 \qquad
 R_2(h_{\ell,u})=\frac{1-u}{\ell+1-u}.
\]
Both errors are at most $1/2$ exactly when $\ell+u=1$. With probability one, no two calibration scores sum to one and no calibration score equals $1/2$, since the finitely many calibration scores have a joint continuous distribution. A grid pair with an infinite endpoint either makes no decisions or assigns only one class, giving error one in the other class. Thus, almost surely, every grid rule with positive coverage violates at least one target.

It remains to connect this grid property to the output of NPI. On the simultaneous-validity event of Theorem~\ref{thm:simultaneous}, none of these rules can pass both calibration checks, and NPI returns only indecisions. This event has probability at least $1-\delta$, proving the claim.
\end{proof}

\subsection{Other calibration boundaries}
\label{app:general-boundaries}

The count comparison in the proof of Theorem~\ref{thm:simultaneous} also applies to calibration checks specified directly through their largest admissible error counts. The following result separates this comparison from the choice of calibration statistic.

\begin{proposition}
\label{prop:general-boundaries}
Under the sampling assumptions of Theorem~\ref{thm:simultaneous}, condition on the training information and calibration labels, with $n_i\ge1$, $i=1,2$. For each class, let $b_i(0)=-1$ and $b_i(m)\in\{-1,0,\ldots,m\}$, $m=1,\ldots,n_i$, be fixed without using the calibration features. Let $W_{i,m}$ be the partial sums of independent Bernoulli$(\alpha_i)$ variables, and define
\[
 \Psi_i(b_i)
 =\Prob\{W_{i,m}\le b_i(m)\text{ for some }1\le m\le n_i\}.
\]
Then
\begin{equation}
 \Prob\!\left\{
 \begin{array}{l}
 \exists\,\ell\le u:\ R_i(h_{\ell,u})>\alpha_i,\\
 e_i(\ell,u)\le b_i\{m_i(\ell,u)\}
 \end{array}\right\}\le\Psi_i(b_i),
 \label{eq:general-boundary-control}
\end{equation}
with equality when the class-$i$ score distribution is continuous. The crossing probability is computed by
\[
 \begin{aligned}
 p_{i,m}(w)
 &=\{\alpha_i p_{i,m-1}(w-1)+(1-\alpha_i)p_{i,m-1}(w)\}
       \ind\{w>b_i(m)\},\\
 \Psi_i(b_i)&=1-\sum_{w=0}^{n_i}p_{i,n_i}(w),
 \end{aligned}
\]
where $p_{i,0}(0)=1$ and entries outside $0\le w\le m$ are zero. If $\Psi_i(b_i)\le\eta_i$ and $(1-\eta_1)(1-\eta_2)\ge1-\delta$, then, with conditional probability at least $1-\delta$, every pair satisfying both checks $e_i\le b_i(m_i)$ obeys both population error limits. Any measurable selection of such a pair, with the all-indecision fallback when no pair passes, therefore satisfies the joint guarantee.
\end{proposition}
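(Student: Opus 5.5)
The plan is to reuse the common-martingale coupling from the proof of Theorem~\ref{thm:simultaneous}, with the boundary $b_i$ in place of $b_{\alpha_i,c}$. Steps 1 and 2 of that proof never used the mixture statistic. They only represent the class-$i$ calibration scores through uniforms $U_j$, reflect them to $V_j$, and form the depths $D_j$ and labels $Z_j$ in \eqref{eq:reference-depth-label}. Ordering by depth then gives independent Bernoulli$(\alpha_i)$ labels with partial sums $W_{i,m}$. Step 3 of that proof gives, on a probability-one event and simultaneously for every pair with $R_i(h_{\ell,u})>\alpha_i$, the count domination $W_{i,m_i(\ell,u)}\le e_i(\ell,u)$ in \eqref{eq:simultaneous-count-domination}.

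If such a violating pair satisfies $e_i\le b_i(m_i)$, then $W_{i,m}\le e_i\le b_i(m)$ at $m=m_i(\ell,u)$. Here $m\ge1$, because $b_i(0)=-1<0\le e_i$. So the violation event is contained in $\{W_{i,m}\le b_i(m)\text{ for some }1\le m\le n_i\}$, which has probability $\Psi_i(b_i)$. The argument is purely a comparison of counts. The monotonicity of $B_\alpha$ in $e$ was needed earlier only to turn the statistic check into a count check, and here the check is already a count check. Measurability follows from the dyadic-rounding reduction in Step 6 of that proof, which does not depend on the form of the check.

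The recursion follows Step 2 of the proof of Lemma~\ref{lem:fixed-pair-calibration} verbatim. Let $p_{i,m}(w)$ be the probability that $W_{i,m}=w$ and that no crossing has occurred by time $m$. A surviving path arrives at $w$ from $w-1$ with probability $\alpha_i$ or from $w$ with probability $1-\alpha_i$, and it survives exactly when $w>b_i(m)$. Summing over $w$ at time $n_i$ gives $1-\Psi_i(b_i)$. The joint statement then follows as in Step 6 of the earlier proof. The two classwise violation events depend on independent samples under the conditioning, so the probability that neither occurs is at least $(1-\Psi_1)(1-\Psi_2)\ge(1-\eta_1)(1-\eta_2)\ge1-\delta$. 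On that event every doubly passing pair is feasible. Any measurable selection therefore inherits the guarantee, and the all-indecision fallback is trivially feasible.

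The step needing most care is equality for continuous $F_i$. I would copy Step 5 of the earlier proof. Suppose the crossing occurs at $m$, so $W_{i,m}\le b_i(m)$, and let $d=D_{(m)}$. Choose small $\epsilon,\epsilon'$ with $\epsilon'<\epsilon(1-\alpha_i)/\alpha_i$ and with no sample points in the enlarged margins. Then define $w=\alpha_i d+\epsilon$ and $v=(1-\alpha_i)d+\epsilon'$. Continuity of $F_i$ gives thresholds whose pair makes exactly $W_{i,m}$ errors among $m$ decisions, so it satisfies $e_i\le b_i(m_i)$. Its population error is $w/(w+v)>\alpha_i$. This realizes every crossing by a strictly violating passing pair, giving the reverse inclusion. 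The only point to confirm is that this step used the statistic solely through ``the first $m$ reference observations pass,'' so it transfers unchanged to an arbitrary boundary.
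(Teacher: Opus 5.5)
Your proposal is correct and follows the paper's proof essentially step for step. It uses the same depth-ordered Bernoulli coupling and the count domination $W_{i,m_i}\le e_i\le b_i(m_i)$, with $m_i\ge1$ forced by $b_i(0)=-1$; dyadic rounding for measurability; the Step~5 construction for equality under continuity; the survival recursion; and the product bound over the two independent classes. Your remark that the monotonicity of $B_\alpha$ was needed only to turn a statistic check into a count check is precisely why the argument transfers to an arbitrary boundary.
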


\begin{proof}
For class $i$, use the uniform variables and depth ordering constructed in Steps 2--3 of the proof of Theorem~\ref{thm:simultaneous}. On one probability-one event, the resulting Bernoulli reference counts satisfy
\[
 W_{i,m_i(\ell,u)}\le e_i(\ell,u)
 \quad\text{for every pair with }R_i(h_{\ell,u})>\alpha_i.
\]
If such a pair passes, then $W_{i,m_i}\le e_i\le b_i(m_i)$, and $m_i\ge1$ because $b_i(0)=-1$. Its passing event is thus contained in the reference crossing event, proving \eqref{eq:general-boundary-control}. As in Step 6 of that proof, rounding thresholds upwards to successively finer dyadic grids eventually preserves all calibration counts and any strict population violation. The event can therefore be expressed using countably many pairs and has a well-defined probability.

When the score distribution is continuous, Step 5 of the same proof constructs, for each reference state $(m,W_{i,m})$, a strictly violating real threshold pair with exactly $m$ decisions and $W_{i,m}$ errors. If $W_{i,m}\le b_i(m)$, this pair passes. This proves the reverse inclusion and equality.

For the recursion, $p_{i,m}(w)$ is the probability of reaching $w$ at step $m$ without crossing. A path arrives from $w-1$ with probability $\alpha_i$ or from $w$ with probability $1-\alpha_i$, and survives exactly when $w>b_i(m)$. Summing the surviving probabilities gives $1-\Psi_i(b_i)$. Finally, the two classwise violation events depend on independent calibration samples under the stated conditioning. Multiplying their complementary probabilities gives at least $(1-\eta_1)(1-\eta_2)\ge1-\delta$. On this event, every passing pair satisfies both constraints; the fallback has both errors zero.
\end{proof}

To combine passing pairs by the rule in \eqref{eq:selected-rule}, a sufficient additional condition is
\begin{equation}
 b_i(m)\le b_i(m+1)\le b_i(m)+1,
 \qquad 0\le m<n_i.
 \label{eq:boundary-shape}
\end{equation}
The first inequality preserves a passing check when a correct decision is added. The second preserves it when an error is removed: $e\le b_i(m)$ implies $e-1\le b_i(m-1)$. The count comparisons in Proposition~\ref{prop:largest-region} then show that the combined pair also passes and has the largest decision region among passing grid pairs.

The boundaries $b_{\alpha,c}$ used in this paper satisfy \eqref{eq:boundary-shape}. Adding a correct decision increases $B_\alpha$, giving $b_{\alpha,c}(m+1)\ge b_{\alpha,c}(m)$; this inequality is immediate if the latter value is $-1$. If $e=b_{\alpha,c}(m+1)\ge1$, removing an error increases $B_\alpha$, so $e-1\le b_{\alpha,c}(m)$. If $e\le0$, the same upper bound follows from $b_{\alpha,c}(m)\ge-1$. Thus the default calibration retains both the martingale interpretation and the largest-region property. Other boundaries can be calibrated by the recursion above; the coverage rate in Theorem~\ref{thm:coverage-oracle} additionally uses the particular lower bound on $B_\alpha$ proved in Appendix Section~\ref{app:coverage-proof}.

\section{Implementation details}
\label{app:calibration-tables}

\subsection{LTT implementations}
\label{app:ltt-implementation}

We apply LTT to a finite collection of threshold pairs $\{(\ell_j,u_j):j=1,\ldots,K_{\mathrm{LTT}}\}$, each defining a rule of the form \eqref{eq:candidate-rule}. For both the simulations and real-data analyses, the threshold grid consists of the training-score quantiles at probabilities $0,0.01,\ldots,1$, together with $-\infty$ and $+\infty$. We consider all pairs with $\ell_j\le u_j$ and retain those with positive decision counts in both classes and training selective errors no larger than their targets. The candidate collection is thus fixed before calibration.

For each pair, we test the null hypothesis that at least one population error exceeds its target,
\[
 H_j:\ R_1(h_{\ell_j,u_j})>\alpha_1
       \quad\text{or}\quad R_2(h_{\ell_j,u_j})>\alpha_2.
\]
Let $M_{ij}$ and $E_{ij}$ denote the numbers of decisions and errors, respectively, among calibration observations from actual class $i$. The classwise binomial $p$-values and their combined $p$-value are
\[
 p_{ij}
 =
 \Prob\!\left\{
 \operatorname{Bin}(M_{ij},\alpha_i)\le E_{ij}
 \right\},
 \qquad
 p_j=\max\{p_{1j},p_{2j}\},
\]
setting $p_{ij}=1$ when $M_{ij}=0$. Conditional on $M_{ij}$ and the training data, the error count has a binomial distribution with success probability $R_i(h_{\ell_j,u_j})$. Taking the maximum therefore gives a valid $p$-value for $H_j$: rejection provides evidence that both error constraints hold \citep{angelopoulos2025learn}. We use two procedures to control the family-wise error rate over the candidate collection at level $\delta=0.05$.

\emph{LTT Holm} applies Holm's step-down adjustment \citep{holm1979simple}. It sorts the calibration $p$-values and rejects successively while $p_{(j)}\le\delta/(K_{\mathrm{LTT}}-j+1)$, stopping at the first nonrejection.

\emph{LTT sequence} fixes a testing order using training data. For each pair, we evaluate the same binomial-tail formula with the training counts,
\[
 \widetilde p_j=\max_{i=1,2}
 \Prob\!\left\{
 \operatorname{Bin}(M^{\mathrm{train}}_{ij},\alpha_i)
 \le E^{\mathrm{train}}_{ij}\right\}.
\]
We order candidates by increasing $\widetilde p_j$ and retain a candidate only if its training coverage exceeds that of all previously retained candidates. These training statistics determine the sequence only; the tests themselves use the calibration $p$-values $p_j$. We use a single starting point, $|\mathcal J|=1$ in Algorithm~1 of \citet{angelopoulos2025learn}, so each test compares $p_j$ with $\delta$ and testing stops at the first nonrejection. For either implementation, we select the rejected candidate with the largest calibration coverage, breaking ties by the smaller lower threshold and then the smaller upper threshold. We return indecisions everywhere if no candidate is rejected.

\begingroup
For the final LTT selection, calibration coverage is $\sum_{i=1}^2\widehat\pi_i M_{ij}/n_i$. In the simulations, we use $\widehat\pi_i=1/2$. In the real-data analyses, let $N_i$ denote the number of class $i$ observations in the training and calibration samples combined; we set $\widehat\pi_i=N_i/(N_1+N_2)$. The training coverage used to construct the LTT sequence uses the training class proportions instead. These weights are used only for LTT coverage comparisons.
\endgroup

\section{Credit-card default prediction}
\label{app:credit-application}
\label{subsec:credit-application}

We use the comparison and evaluation protocol in Section~\ref{sec:real_data}, with 1,000 random partitions. Our second application uses the Default of Credit Card Clients data set from the UCI Machine Learning Repository \citep{yeh2009creditdata}. The data contain 30,000 clients in Taiwan, with the response indicating default payment in the following month. Class 1 comprises 23,364 clients without recorded default, and class 2 comprises 6,636 with recorded default. The 23 predictors describe credit limits, demographics, and repayment, billing, and payment histories. We exclude the client identifier and retain all recorded category codes. Full indicator encoding of sex, education, and marital status yields 33 numeric inputs.

Each partition assigns 10,000 clients to training, 10,000 to calibration, and 10,000 to testing. The score is the equal-weight average of nondefault probabilities from a 300-tree random forest \citep{breiman2001randomforests} and an XGBoost model \citep{chen2016xgboost} with 300 rounds, depth 3, and learning rate $0.05$. Both learners use the same 33 inputs, with settings held fixed across partitions; further details appear in Section~\ref{app:credit-score}. The ensemble choice was informed by a preliminary, NPI-directed comparison of four learners at the $15\%$ target.

We evaluate $\alpha\in\{0.05,0.10,0.15,0.20,0.25,0.30\}$; Figure~\ref{fig:credit_alpha} displays the four largest targets. The two selective errors concern nondefaulting clients classified as defaulters and defaulters classified as nondefaulting clients, respectively, among definitive classifications within each actual class.

\begin{figure}[!t]
    \centering
    \resizebox{\textwidth}{!}{%
        \input{arxiv_figures/np_credit_ltt.tex}%
    }
    \caption{\small Credit-card default results over 1,000 random partitions, using an equal-weight random forest and boosted-tree ensemble with 10,000 training, 10,000 calibration, and 10,000 test observations. The panels show held-out selective error minus $\alpha$, using the colors and markers in Figure~\ref{fig:compas_alpha}. All 1,000 partitions enter the error distributions, quantiles, and mean indecision percentages, using $0/0=0$. Zero-error bars and positive-error violin widths are scaled by their respective proportions of partitions; arrows mark off-scale quantiles or zero-error bars.
    }
    \label{fig:credit_alpha}
\end{figure}

Figure~\ref{fig:credit_alpha} shows that NPI achieves lower mean test indecision than LTT Holm at each displayed target. At the $15\%$ target, these percentages are $87.6\%$ and $99.3\%$, respectively. At targets of $20\%$, $25\%$, and $30\%$, NPI's mean indecision decreases to $58.7\%$, $36.8\%$, and $13.5\%$, compared with $64.2\%$, $42.8\%$, and $20.1\%$ for LTT Holm. The 95th percentiles of NPI's two test errors remain below the target in all four settings: they are $14.1\%$ and $13.9\%$ at the $15\%$ target, and $29.2\%$ and $28.8\%$ at the $30\%$ target.

LTT sequence returns indecisions everywhere in all 1,000 partitions at every displayed target. Its first candidate fails calibration in each case, terminating the sequence before any rule passes. Training errors from these flexible learners can be overly optimistic when constructing the testing order. NPI does not require this ordering and can select among all passing pairs on its calibration grid.

The additional targets of $5\%$ and $10\%$ permit few decisions for any of the three procedures. NPI's mean indecision is $100\%$ and $99.94\%$, respectively; both LTT implementations return indecisions everywhere at these targets.

\subsection{Score fitting}
\label{app:credit-score}

The credit-card analysis uses 20 numeric predictors and full indicator encoding of sex (codes 1--2), education (0--6), and marital status (0--3), giving 33 inputs. These fixed encodings retain all recorded categories; no observations are removed or category codes merged. Within each partition, both learners are fitted only to the 10,000 training observations, with no hyperparameter tuning or early stopping.

We fit a probability forest using \texttt{ranger} \citep{wright2017ranger}, with 300 bootstrap trees, five candidate predictors per split, Gini splitting, and minimum node size 1. Each bootstrap sample has the same size as the training sample, and tree depth is unrestricted. The XGBoost model uses a binary logistic objective, 300 rounds, depth 3, learning rate $0.05$, and histogram tree construction with 256 bins. It uses all rows and predictors at each round, minimum child weight 1, minimum split loss 0, $L_2$ penalty 1, $L_1$ penalty 0, and initial probability $0.5$. The fitted score is the mean of the two estimated nondefault probabilities. We evaluate both fitted models directly on each sample; training scores use fitted predictions rather than out-of-bag predictions.

\end{document}

%% file: arxiv_figures/npi_threshold_cartoon.tex
% Mathematical cartoon, not a plot of simulated observations.
% Requires tikz, amsmath, helvet and T1 fontenc;
% TikZ libraries arrows.meta and decorations.pathreplacing.
\begingroup
% Approximately ggplot2 base_size=7; fonts remain local to this picture.
\def\npiBaseFont{\fontsize{7}{9}\selectfont}
\def\npiEmphasisFont{\fontsize{8}{10}\selectfont}
\fontfamily{phv}\selectfont
\begin{tikzpicture}[x=1.12cm,y=.80cm,>=Latex,
  font=\fontfamily{phv}\npiBaseFont]
\definecolor{npiink}{HTML}{263746}
\definecolor{npicorrect}{HTML}{008B73}
\definecolor{npierror}{HTML}{D67A22}
\definecolor{npimiddle}{HTML}{A8AFB5}
\definecolor{npbaseline}{HTML}{76528D}
\path[use as bounding box] (0,.24) rectangle (12.0,7.98);

% A shared score axis: the distribution is shown separately for each true class
% so that the numerator and denominator of each selective error are visible.
\fill[npimiddle!9] (5.19,1.90) rectangle (6.81,7.42);
\foreach \base/\mu/\idx in {5.00/1/1,1.90/-1/2}{
  \ifnum\idx=1
    \colorlet{lefttail}{npierror}
    \colorlet{righttail}{npicorrect}
  \else
    \colorlet{lefttail}{npicorrect}
    \colorlet{righttail}{npierror}
  \fi
  \fill[lefttail!40] (.6,\base)
    -- plot[domain=-4:-.6,samples=100]
      ({6+1.35*\x},{\base+2.15*exp(-(\x-(\mu))^2/2)})
    -- (5.19,\base) -- cycle;
  \fill[npimiddle!30] (5.19,\base)
    -- plot[domain=-.6:.6,samples=75]
      ({6+1.35*\x},{\base+2.15*exp(-(\x-(\mu))^2/2)})
    -- (6.81,\base) -- cycle;
  \fill[righttail!40] (6.81,\base)
    -- plot[domain=.6:4,samples=100]
      ({6+1.35*\x},{\base+2.15*exp(-(\x-(\mu))^2/2)})
    -- (11.4,\base) -- cycle;
  \draw[npiink!40,line width=.45pt] (.6,\base) -- (11.4,\base);
  \draw[npiink,line width=.9pt]
    plot[domain=-4:4,samples=180]
      ({6+1.35*\x},{\base+2.15*exp(-(\x-(\mu))^2/2)});
}
\node[anchor=west,font=\npiEmphasisFont\bfseries,text=black]
  at (.6,7.56) {Actual class 1};
\node[anchor=west,font=\npiEmphasisFont\bfseries,text=black]
  at (.6,4.46) {Actual class 2};
\node[anchor=east,font=\npiBaseFont,text=npiink!65]
  at (11.4,7.65) {};

% Label correct and erroneous decision masses. Grey mass is excluded.
\node[text=npierror,font=\npiEmphasisFont] (eone) at (3.85,5.48) {$E_1$};
\draw[->,npierror,line width=.6pt] (eone.south east) to[bend right=12] (4.73,5.14);
\node[text=npicorrect,font=\npiEmphasisFont] at (7.85,5.87) {$C_1$};
\node[text=npicorrect,font=\npiEmphasisFont] at (4.20,2.77) {$C_2$};
\node[text=npierror,font=\npiEmphasisFont] (etwo) at (8.48,2.62) {$E_2$};
\draw[->,npierror,line width=.6pt] (etwo.south west) to[bend left=12] (7.27,2.04);
\draw[npiink!60,dashed,line width=.6pt] (5.19,1.49) -- (5.19,7.40);
\draw[npiink!60,dashed,line width=.6pt] (6.81,1.49) -- (6.81,7.40);

% The NP cutoff is schematic; NPI need not bracket it in general.
\draw[npbaseline,dash pattern=on 3pt off 1.4pt on .5pt off 1.4pt,line width=.7pt] (5.61991,1.49) -- (5.61991,7.39);
\node[anchor=south,text=npbaseline,font=\npiBaseFont] at (5.61991,7.39) {Baseline NP cutoff};
% Selective error definitions sit in the open space within each density panel.
\node[text=npiink,font=\npiBaseFont] at (2.70,6.98)
  {Selective type I error};
\node[text=npiink,font=\npiEmphasisFont] at (2.70,6.24)
  {$R_1=
    \dfrac{\textcolor{npierror}{E_1}}
    {\textcolor{npierror}{E_1}+\textcolor{npicorrect}{C_1}}
    \le \alpha_1$};
\node[text=npiink,font=\npiBaseFont] at (9.25,4.07)
  {Selective type II error};
\node[text=npiink,font=\npiEmphasisFont] at (9.25,3.32)
  {$R_2=
    \dfrac{\textcolor{npierror}{E_2}}
    {\textcolor{npierror}{E_2}+\textcolor{npicorrect}{C_2}}
    \le \alpha_2$};

% Shared score axis, followed directly by the NPI area legend.
\draw[->,npiink,line width=.7pt] (.6,1.59) -- (11.62,1.59);
\node[anchor=north,text=npiink] at (5.19,1.46) {$\ell$};
\node[anchor=north,text=npiink] at (6.81,1.46) {$u$};
\node[anchor=east,text=npiink,font=\npiBaseFont] at (11.4,1.31) {score $s(x)$};

% State explicitly that density fills describe the NPI rule.
\node[anchor=west,text=npiink] at (.60,.62) {};
\fill[npicorrect!40] (2.08,.51) rectangle (2.36,.74);
\node[anchor=west,text=npiink] at (2.47,.62) {Correct ($C_i$)};
\fill[npierror!40] (5.45,.51) rectangle (5.73,.74);
\node[anchor=west,text=npiink] at (5.84,.62) {Errors ($E_i$)};
\fill[npimiddle!30] (8.30,.51) rectangle (8.58,.74);
\node[anchor=west,text=npiink] at (8.69,.62) {Indecisions};
\end{tikzpicture}
\endgroup